\definecolor{red}{rgb}{1.0,0.0,0.0}
\definecolor{blu}{rgb}{0.0,0.0,1.0}
\definecolor{gre}{rgb}{0.03,0.50,0.03}
\definecolor{amethyst}{rgb}{0.6, 0.4, 0.8}
\definecolor{blue-violet}{rgb}{0.54, 0.17, 0.89}
\definecolor{darkviolet}{rgb}{0.58, 0.0, 0.83}
\def\sqr#1#2{{\vcenter{\vbox{\hrule height .#2pt \hbox{\vrule
 width .#2pt height#1pt \kern#1pt \vrule
width .#2pt} \hrule height .#2pt}}}}
\def\qedo{\hbox{\hskip 6pt\vrule width6pt height7pt
depth1pt  \hskip1pt}\bigskip}
\def\ds{\begin{displaystyle}}
\def\eds{\end{displaystyle}}
\def\<{\left\langle }
\def\>{\right\rangle }
\def\H{\mathbb H}
\def\R{\mathbb R}
\def\calz{{\mathcal Z}}
\def\1{\mathbf 1}
\def\to{\rightarrow}
\newtheoremstyle{mytheorem}
{3pt}
{1pt}
{\itshape}
{-0pt}
{\bf}
{}
{1em}
{}
\newtheoremstyle{mydefinition}
{3pt}
{6pt}
{\itshape}
{-0pt}
{\bf}
{}
{1em}
{}
\newtheoremstyle{myremark}
{3pt}
{1pt}
{\rm}
{-0pt}
{\bf}
{}
{1em}
{}
\theoremstyle{mytheorem}
\newtheorem{Theorem}{Theorem}[section]
\newtheorem{Proposition}[Theorem]{Proposition}
\newtheorem{Corollary}[Theorem]{Corollary}
\theoremstyle{mydefinition}
\newtheorem{Definition}[Theorem]{Definition}
\newtheorem{Hypothesis}[Theorem]{Hypothesis}
\theoremstyle{myremark}
\newtheorem{Remark}[Theorem]{Remark}
\newtheorem{Notation}[Theorem]{Notation}
\DeclareMathOperator*{\argmax}{arg\,max}
\begin{document}

\title[Verification results for age-structured models...]{Verification results for age-structured models of economic-epidemics dynamics}

\author[G.Fabbri]{Giorgio Fabbri}
\address{Univ. Grenoble Alpes, CNRS, INRA, Grenoble INP, GAEL, Grenoble, France.}
\email{giorgio.fabbri@univ-grenoble-alpes.fr}

\author[F.Gozzi]{Fausto Gozzi}
\address{Dipartment of Economics and Finance, LUISS University, Rome, Italy.}
\email{fgozzi@luiss.it}

\author[G. Zanco]{Giovanni Zanco}
\address{Dipartment of Economics and Finance, LUISS University, Rome, Italy.}
\email{gzanco@luiss.it}


\begin{abstract}

In this paper we propose a macro-dynamic age-structured set-up for the analysis of epidemics/economic dynamics in continuous time.

\smallskip

The resulting optimal control problem is reformulated in an infinite dimensional Hilbert space framework where we perform the basic steps of dynamic programming approach.

\smallskip

Our main result is a verification theorem which allows to guess the feedback form of optimal strategies. This will be a departure point to discuss the behavior of the models of the family we introduce and their policy implications.
	


\medskip\ \newline
\textit{Keywords}: COVID-19, macro-dynamic models, epidemiological dynamics, Hilbert spaces, verification theorem. \medskip\ \newline
\textit{JEL Classification}: E60, I10, C61.

\end{abstract}

\maketitle


\section{Introduction}

The outbreak of the COVID-19 pandemic represents, in addition to an epidemiological historical event, an exceptional economic shock. Data from the OECD (2020) suggest that in many countries the loss of GDP due to the presence of the virus and the consequent containment measures will be at least 10\%. For this reason, together with the obvious upsurge in medical scientific production on the subject, the phenomenon has had a great echo in economic literature with a strong pressure to merge economic and epidemiological models.

\medskip

A specific effort has been made to integrate epidemiological compartmental models (SIR, SEIR, SEI...) into a macroeconomic dynamic context, see for example the contributions of Alvarez et al., (2020), Eichenbaum et al., (2020), Jones et al., (2020) and Krueger et al.(2020).

These articles focus on a series of questions essential to health and economic policy and they look, often numerically, at the trade-off between measures capable of containing contagion and those capable of avoiding economic collapse. However, they model the spread of the epidemic with age homogeneous epidemiological compartmental models so they cannot take into account  one of the characteristic traits of the current epidemic, i.e. the great difference in the effects of the disease among people of different ages\footnote{In fact, the probability of aggravation of COVID-19 infection and mortality varies very significantly with age. Salje et al. (2020) find for example that less than 1\% of people under 40 years of age who contract the disease need hospital care against more than 10\% of people over 70 years of age and that mortality in the two groups is respectively less than 0.02\% and more than 2\%.}.

\medskip

In order to address this limitation Acemoglu et al. (2020), Gollier (2020) and Favero et al. (2020) introduce models where the population is divided into a finite number of homogeneous ``risk groups'' and they study joint economic and epidemiological effects of introducing group-specific policies. Nonetheless in their formulations there is no possibility to move from one group to another and then this kind of approach can take into account the different effects of the disease on different age groups only if it is assumed that the duration of the epidemic is negligible compared to the age range contained in each group. However, this hypothesis is not very likely in the case of an epidemic lasting several years and it is inadequate in the case of diseases that become endemic in the population\footnote{These limitations are obviously justified by the need to produce policy indications in a short time in order to contrast the spread of the current pandemic.}.

\medskip

Instead of using age-homogeneous epidemiological compartmental models or epidemiological compartmental models with closed risk groups, it is possible, as we do in the present work, to describe more accurately the joint dynamics of the epidemic and of the age structure of the population by using explicit age-structured compartmental models, i.e. age-specific epidemiological models with ageing process modeled \emph{\`a la} Mc Kendrick (1925). This type of models was initially introduced by Anderson and May (1985) and Dietz and Schenzle (1985) and later adapted to numerous contexts and applications, see the books by Iannelli (1995), Iannelli and Milner (2017) and Martcheva (2015) for a structured and modern description of the matter.

The more accurate description of the ageing-epidemics diffusion dynamics comes at a price and, indeed, one of the features of the continuous time compartmental age-structured models is to describe the epidemiological dynamics through transport type partial differential equations  (PDEs). This means that, if one wants to study an associated optimal control problem through dynamic programming, its dynamics needs to be seen in an infinite-dimensional set-up.

\bigskip

In this paper we present verification type results for a class of macro-dynamic models that incorporates an epidemiological dynamics which generalizes the benchmark age-structured SIR model.

In the model, the planner optimizes an optimal social functional by choosing policies to reduce the spread of the virus, taking into account (i) their effectiveness in terms of human lives saved  (ii) their impact on the labor supply and therefore on the level of production and consumption reached by the country (iii) their direct costs. As in standard growth models, the planner also decides the level of consumption, which can be age-specific, and consequently, the level of investments that determines the dynamics of capital accumulation.

The class of models that we study in the abstract form is rather general and is able, in the context of the epidemiological dynamic described by an age-structured SIR, to reproduce as special cases several of the settings proposed by the recent articles mentioned above (more details can be found in Section \ref{sec:model}). Specific traits of the model are:
\begin{itemize}
\item[-] The epidemiological model is general and can be set with a wide variety of age-dependent parameters: mortality rate (due to epidemics and also to other causes), chance of being hospitalized if infected, birth rate and probability of contagion among cohorts. Moreover, the age-specific mortality rate can take into account the saturation of hospitals and health systems, a phenomenon that has been repeatedly observed in the areas most affected by COVID-19 (see for instance Moghadas, 2020)
\item[-] The planner has two different policy levers: on the one hand, as in most of the models mentioned, she can reduce the mobility of people and partially stop the economic activity (lock-down), on the other hand she can implement some costly action to reduce the diffusion of the virus, for instance by testing the population extensively to try to quarantine individuals fast once they have contracted the virus. Both policies can be  age-specific (in particular targeted lock-downs suggested by Acemoglu et al., 2020 are among possible policies)
\item[-] The time horizon can be either infinite (as in benchmark growth models) or finite if it is considered (e.g. Gollier, 2020) that the spread of the virus stops at some point due to the discovery of a vaccine or a cure
\item[-] Labor productivity is age-specific (this fact is important for policies: targeted lock-downs for less productive people impact less the production). The production function (as a function of aggregated labor and capital) is general as well as the optimal social function that can be specified to take into account cost-benefit analysis, strictly humanitarian or economic targets, standard (Benthamite and Millian for instance) social welfare functionals.
\end{itemize}
Since our results are proven for the abstract model they hold for any possible specification.

The contribution of this work is (i) to propose a general fully age-structured macro-dynamic set-up in continuous time for analysis of epidemics and economic dynamic (Section \ref{sec:model}); (ii) to provide a suitable Hilbert space environment where one can rewrite the problem and perform dynamic programming (Section \ref{sec:infdim}); (iii) to prove verification type results (Section \ref{sec:verification}), see in particular Theorem \ref{thm:comparison} and Corollaries \ref{cor:verification1} and \ref{cor:verification2}.

We must be clear on the fact that we do not solve the problem explicitly, nor numerically. Here our main goal is to provide a general ground which can be the departure point to attack special cases of our general model. In particular our main contribution is the proof of the verification type results of Section \ref{sec:verification}. These are
nontrivial to obtain in our general infinite dimensional setting and they are crucial to find the optimal policies in a closed-loop form depending on the derivatives of the value function.
These type of theorems are the object of various papers (see e.g. the papers Faggian and Gozzi (2010), Fabbri et al. (2010)) or of book chapters (e.g. Chapter 5 of Yong and Zhou, 1999 or Chapter 4 of Li and Yong, 1995) but none of therm applies to our case.
The main reasons are the following.
First, due to the age-structured nature of the problem and the presence of the mortality forces, we have to work with semigroups in weighted infinite dimensional spaces which do not have regularizing properties (which are very useful and which are usually true when the state equation is of (nondegenerate) second order).
Second, the presence of the nonlinear equation for capital rules out the standard regularity assumptions that are used e.g. in Chapter 4 of Li and Yong (1995) and that we treat using \emph{ad-hoc} arguments.
Third, in our case we have state constraints, which makes much more difficult to deal with the problem. We use the approach of weakening the constraints which has been used, up to now, only in case when explicit solutions of the HJB equation are available (see e.g. Fabbri and Gozzi, 2008, Boucekkine et al., 2019).



\bigskip

The paper is organized as follows. In Section \ref{sec:model} we introduce the structure of the model: epidemiological dynamics, policies, structure of the economy and welfare functional. In Section \ref{sec:infdim} we show how to reformulate the model and the related optimal control problem in a suitable Hilbert space setting. Section \ref{sec:HJB} is devoted to dynamic programming while in Section \ref{sec:verification} we provide the verification results. Section \ref{sec:conclusion} concludes.

\section{The model}
\label{sec:model}

\subsection{Epidemics dynamics}

We denote by $s(a,t)$ the density of susceptible individuals of age $a\in[0, \bar a]$ (being $\bar a>0$ the maximum age) at time $t\geq 0$. Similarly $i(a,t)$ (respectively $r(a,t)$) denotes the density of infected/infectious individuals (respectively recovered individuals) of age $a$ at time $t$.
Hence the total numbers of susceptible, infected, and recovered individuals are
\[
S(t) = \int_0^{\bar a} s(a,t) da,
\qquad I(t) = \int_0^{\bar a} i(a,t) da,
\qquad R(t) = \int_0^{\bar a} r(a,t) da.
\]
The age-dependent density of the total population $n(a,t)$ is then given by
\[
n(a,t) = s(a,t) + i(a,t) + r(a,t)
\]
and the total population is
\[
N(t) = \int_0^{\bar a} n(a,t) da = S(t) + I(t) + R(t).
\]
In modeling the mortality we generalize the standard age-structure SIR framework (see Martcheva, 2015, Chapter 12).
First we define
\[
\Xi(t) := \int_0^{\bar a} i(a,t) \xi(a) da
\]
the number of people at time $t$ in ``critical conditions'' i.e. people who have to use the services of hospital/healthcare facilities to treat themselves at the risk of saturating them. In the case of the COVID-19 epidemic, the emphasis is, for example, on people needing to be hospitalized in an ICU (see for instance Moghadas, 2020). $\Xi(t)$ depends on the number of sick people per cohort multiplied by the prevalence $\xi(a)$ of people in need of specific care for each age group\footnote{In the case of COVID-19 for example, in the data of Salje (2020), 2.9\% of infected individuals are hospitalized ranging from 0.1\% in people under 20 years to approximately 30\% in individuals with 80 years of age or older.}.
In a context of saturation of hospital services, the mortality of the infected will be increased. Let us therefore assume that the mortality rate for infected individual $\mu_I$ is not only a function of the age of the individuals but it also en increasing function of $\Xi$. We use then the notation $\mu_I(a, \Xi(t))$. We suppose, for simplicity, that the (age-specific) mortality rates of susceptible and recovered individuals, respectively $\mu_S(a)$ and $\mu_R(a)$ do not depend\footnote{Indeed it is possible to incorporate this dependence in the model without big problems.} on $\Xi(t)$. Finally $\gamma(a)$ and $\beta(a)$ denote respectively the (age-specific) recovery and birth rates.

The age-specific force of infection $\lambda (a,t)$  depends on the distribution of infected individuals as follows
  \begin{equation}\label{eq:lambdadef}
    \lambda(a,t)=\frac{1}{N(t)}\int_0^{\bar a}m(a,\tau) i(\tau,t) d\tau.
  \end{equation}
In this expression the joint-distribution $m(a,\tau)$ measures the different probability of contagion between cohorts (for instance virus diffusion can be easier among children for childhood diseases). It is the continuous version of the social contact matrix across age classes used by Gollier (2020).


All in all, the \emph{laissez faire} benchmark population dynamics, that is the epidemics dynamics without policy intervention (omitting the initial conditions at time $t=0$) is the following:
\begin{equation}
\label{eq:state-freediffusion}
\left \{
\begin{array}{l}
\frac{\partial s(a,t)}{\partial t} + \frac{\partial s(a,t)}{\partial a} = - \lambda(a,t)s(a,t) - \mu_S (a)s(a,t),\\[5pt]
\frac{\partial i(a,t)}{\partial t} + \frac{\partial i(a,t)}{\partial a} =  \lambda(a,t)s(a,t) - (\mu_I (a, \Xi(t)) + \gamma (a))i(a,t),\\[5pt]
\frac{\partial r(a,t)}{\partial t} + \frac{\partial r(a,t)}{\partial a} =\gamma (a)i(a,t) - \mu_R (a) r(a,t)\\[5pt]
s(0,t) = \int_{0}^{\bar a} \beta(a) n(a,t) da\\[5pt]
i(0,t) = 0 \\[5pt]
r(0,t) = 0.
\end{array}
\right .
\end{equation}
This system is the standard age-structure SIR model (see Martcheva, 2015, Chapter 12) except for the fact that $\mu_I$ depend on $\Xi$. In the particular case where $\mu_I(a,\Xi(t))=\tilde \mu_I(a)$ we are exactly in the standard setting.
Note that, since $\lambda$ and $\Xi$ depend linearly on $i$, the system \eqref{eq:state-freediffusion} is non linear in the variables $(s,i,r)$.

\medskip

We now introduce two of the three policy levers that the planner has in our model (the third is the choice of consumption and will be described in the next subsection). We suppose that the planner can deal with the epidemic in two ways:
\begin{itemize}
\item[(i)] partially stopping economic activity and people mobility then reducing the contagion frequency among individual (lockdown);
\item[(ii)] implementing some costly action to reduce the diffusion of the virus, for instance by testing the population extensively to try to quarantine individuals faster once they have contracted the virus.
\end{itemize}
More precisely
\begin{itemize}
	\item[(i)] We suppose that the planner can reduce mobility and then the probability of infecting and being infected of cohort $a$ at time $t$ by a factor $\theta(a,t)\in [0,1]$ at the cost of reducing the contribution of the concerned individuals to work or by reducing their work productivity (for example resorting to teleworking). 
	This is the type of intervention which is modeled in almost all the macro-dynamic models we mentioned in the introduction, for instance in Alvarez et al., (2020) and Eichenbaum et al., (2020) where, by the way, age-structure policies are not possible since there is no age structure of the population. Taking different values of $\theta$ for different $a$ correspond to target lock-downs of Acemoglu et al. (2020). 
	\item[(ii)] We suppose, as in some of the mentioned papers, that the planner can reduce by a factor $\eta(a,t)\in [0,1]$ the probability that infected individuals of cohort $a$ at time $t$ contaminate other people. This is done at the cost
\begin{equation}\label{eq:deftestcost}
	D_\eta(t) := D\left ( \int_0^{\bar a} \eta(a,t) i(a,t)  e(a)  da \right ).
\end{equation}
	where $e(a)$ is an age-specific relative cost and $D$ is a concave (as, for instance in Piguillem and Shi, 2020) or linear (as in Gollier, 2020) function which represents some form of congestion (e.g. shortage of tests on the international market or shortage of suitable medical personnel to administer the tests).
\end{itemize}

The evolution of the epidemics is then again described by (\ref{eq:state-freediffusion}) but, instead of $\lambda(a,t)$ written in \eqref{eq:lambdadef} we have now the following age-specific force
\begin{equation}\label{eq:lambdadefcontrolled}
    \lambda^{\theta,\eta}(a,t) = \frac{\theta(a,t)}{N(t)} \int_0^{\bar a} m(a,\tau) \theta(\tau,t) \eta(\tau,t) i(\tau, t) d\tau.
\end{equation}
Hence we get the following state equation (still omitting the initial conditions at time $t=0$):
\begin{equation}
\label{eq:state-controlled}
\left \{
\begin{array}{l}
\frac{\partial s(a,t)}{\partial t} + \frac{\partial s(a,t)}{\partial a} = - \lambda^{\theta,\eta}(a,t)s(a,t) -\mu_S (a)s(a,t),\\[5pt]
\frac{\partial i(a,t)}{\partial t} + \frac{\partial i(a,t)}{\partial a} =  \lambda^{\theta,\eta}(a,t)s(a,t) - (\mu_I (a, \Xi(t)) + \gamma (a))i(a,t),\\[5pt]
\frac{\partial r(a,t)}{\partial t} + \frac{\partial r(a,t)}{\partial a} =\gamma (a)i(a,t) - \mu_R (a) r(a,t)\\[5pt]
s(0,t) = \int_{0}^{\bar a} \beta(a) n(a,t) da\\[5pt]
i(0,t) = 0 \\[5pt]
r(0,t) = 0.
\end{array}
\right .
\end{equation}
Of course if the authority fixes $\theta(a,t)\equiv 1$, $\eta(a,t)\equiv 1$ we find again the free diffusion dynamics (\ref{eq:state-freediffusion}).

\medskip

\subsection{Production and capital accumulation}
We suppose that labor supply is perfectly inelastic to wage, that infected people do not work and that labor productivity is age-specific\footnote{We abstract from other reasons of productivity heterogeneity among population and from heterogeneity of tasks.} and proportional to a certain parameter $\alpha(a)$ (we can specify for instance $\alpha(a)=0$ for children or for individuals older than a fixed retirement age). Total labor supply in efficiency units in the \emph{laissez faire} benchmark is then given by $\int_0^{\bar a} (s(a,t) + r(a,t)) \alpha (a) da$. In the controlled case we suppose that getting a factor $\theta(a,t)\in [0,1]$ in the expression of the age-specific force of diffusion impacts the productivity of cohort $a$ reducing the productivity to $\varphi(\theta(a,t))$ so that total labor supply in efficiency units\footnote{A similar approach is considered for instance by Jones et al. (2020) which introduce an ``effective labor supply''.} is now
\begin{equation}
\label{eq:laboursupply}
L(t) = \int_0^{\bar a} (s(a,t) + r(a,t)) \alpha (a) \varphi(\theta(a,t)) da.
\end{equation}
We suppose that $\varphi\colon [0,1] \to [0,1]$ is an increasing function with $\varphi(1)=1$.

As for the production we stick to the standard structure of neoclassical growth models and we suppose that the total production at time $t$ is described by an aggregated production function $F$ of the two factors: labor $L(t)$ and capital $K(t)$:
\[
Y(t) = F(K(t), L(t)).
\]
This formulation is more general than that used by other macro-dynamic papers we mentioned. Indeed in all of them except Favero et al. (2020) which uses a Cobb-Douglas production function, the authors use production functions which are linear function of the of labor (or effective labor) or even do not model production.



We abstract from international trade (closed economy) and from governmental expenditure so the planner can choose at any time $t\ge 0$ how to allocate the national total production $Y(t)$ among total investment $M(t)$, consumption of various cohorts and costs for testing people, which is defined in \eqref{eq:deftestcost} above. If we denote by $c(a,t)$ the per-capita consumption of individuals of age $a$ at time $t$ we get the following budget constraint:
\[
Y(t) = M(t) + C(t) + D_\eta(t) := M(t) + \int_0^{\bar a} c(a,t) n(a,t) da + D\left ( \int_0^{\bar a} \eta(a,t) i(a,t)  e(a) da  \right ).
\]
Supposing to have an exponential capital depreciation \emph{\`a la} Jorgenson we get the dynamic accumulation law for capital:
\begin{equation}
  \label{eq:K}
\dot{K}(t) = F(K(t),L(t)) - \int_0^{\bar a} c(a,t) n(a,t) da \, -\delta K(t) - D\left ( \int_0^{\bar a} \eta(a,t) i(a,t)  e(a) da  \right ).
\end{equation}
where $\delta>0$ is the constant depreciation rate.

Observe that, as far as we know (but the litterature is quickly growing) this is the first paper among the macro-dynamic papers on COVID-19 pandemics where capital accumulation is explicitly taken into account (as said before almost all the papers we mentioned even consider labor as unique production factor). This is of course very relevant if one wants to understand the consequences in terms of investments of the epidemiological shock.


\subsection{Choosing the target}
%
%
%

For the functional to maximize there are several interesting choices in the literature. It is not easy to include them all in an abstract form that leaves the problem tractable, so in this section we introduce several functionals that will be discussed later in the article.

The first functional we introduce is a standard welfare functional. Observe that, even if the model we study here is not directly an endogenous fertility model, the fact of having an endogenous mortality (depending on the choice of $\theta$) makes it \emph{de facto} an endogenous population size model. Therefore we have to choose carefully the structure of the
social utility that we describe. We implicitly fix the utility of dead people (and non-born people through the initial condition $\int_{0}^{\bar a} \beta(a) n(a,t) da$) equal to $0$ and we consider the following social utility functional:
\begin{equation}\label{eq:target1infhor}
\int_0^\infty \int_0^{\bar a} e^{-\rho t} n^\nu(a,t) u(c(a,t),\theta(a,t)) \, da \, dt.
\end{equation}
To assure, for the same per capita (age-dependent) consumption, the instantaneous utility to be increasing in the number of living people and therefore the planner being averse to death of agents, the per-capita utility function $u$ needs to be positive.
Still observe that in this model formulation there is room for a dilution effect: the larger the population the lower the percapita consumption so the instantaneous utility does not need to always be increasing in the population size.

The per-capita utility function $u$ depends both on the individual consumption and on the mobility freedom $\theta$. We suppose that $u$ is (positive and) an increasing function in both the variables. The dependence of utility on $\theta$ is not standard but the relevance of this choice can easily be argued by looking at the various side effects of lock down (see for example Clemens, 2020). In any case, as a special case, of course one can specify $u$ so that it does not depend on theta.

The form of this first functional is the age-structured version of a standard functional often appearing in the optimal population literature. The parameter $\nu$ which appears in its expression measures the degree of altruism towards individuals of future cohorts (see Palivos and Yip, 1993). The case $\nu = 1$ corresponds to the classical total utilitaristic (or ``Benthamite'') case where the planner target is to maximize the sum of individuals' utility.


The functional (\ref{eq:target1infhor}) is infinite horizon and implicitly suggests that no exogenous element impedes the spread of the virus. Another possibility, as suggested by Gollier (2020), is to consider a final time $T$ at which an event (a cure or more probably the discovery of a vaccine) stops the epidemics. We describe some possible targets in this context.

The trade-off of virus containment policies is: reducing the number of deaths VS economic losses. Some of the functional aspects can be dwelt on only one of these aspects. For instance one can decide to focus on economic activity and to maximize the final production capacity:
\begin{equation}\label{eq:target1finhor}
F(K(T),\tilde L(T))
\end{equation}
where $\tilde L(T)$ is defined as
\begin{equation}\label{eq:tildeLdef}
\tilde L(T) = \int_0^{\bar a} n(a,T) \alpha (a) da
\end{equation}
(once the vaccine is in and the outbreak is over, everyone is cured and everyone is productive) or even more simply, to maximize final capital level
\begin{equation}
\label{eq:target2finhor}
K(T)
\end{equation}
or to maximize the flow of production
\begin{equation}
\label{eq:target2infhor}
\int_0^{+\infty} e^{-\rho t} Y(t) dt.
\end{equation}
or its finite counterpart
\begin{equation}\label{eq:target3finhor}
\int_0^{T} e^{-\rho t} Y(t) dt.
\end{equation}
Conversely one can focus on humanitarian aspects and decide to maximize the number of deaths due to the virus:
\begin{equation}\label{eq:target4finhor}
\int_0^{T} \int_0^{\bar a} \mu_I (a, \Xi(t)) i(a,t) da dt.
\end{equation}

It is also possible to simply take into account both the economic and humanitarian aspects by taking a weighted sum of the (\ref{eq:target2infhor}) (or \ref{eq:target2finhor}) or (\ref{eq:target3finhor})) and (\ref{eq:target1finhor}), this is the choice of Acemoglu et al. (2020).

\section{Infinite dimensional formulation of the model}
\label{sec:infdim}


In this section we introduce a convenient infinite-dimensional formulation for system (\ref{eq:state-controlled}) coupled with equation (\ref{eq:K}) and for the control problem of maximizing the target given in \eqref{eq:target1infhor} (or in \eqref{eq:target2infhor}, \eqref{eq:target1finhor}, \eqref{eq:target2finhor}, \eqref{eq:target3finhor}).
We make the following set of assumptions, which also includes those already stated in the previous sections.
These assumptions will be always true in the remainder of the paper
without mentioning them.
\begin{Hypothesis}\label{hp:main}
  \begin{enumerate}[label=$(\roman{*})$]
\item[]
\item $\mu_S$ and $\mu_R$ are positive, belong to $L^1_{\text{loc}}(0,\bar a)$ and
    $$
    \int_0^{\bar a}\mu_S(a)da=\int_0^{\bar a}\mu_R(a)da=+\infty;
    $$
\item $\mu_I\colon[0,\bar a]\times \R\to \mathbb{R}_+$ is measurable. Moreoves it is Lipschitz continuous in the second variable, uniformly with respect to the first one. Finally it is increasing in the second variable and
        $$
    \int_0^{\bar a}\mu_I(a,\kappa)da=+\infty, \qquad \forall \kappa\in \R;
    $$
\item $F(\cdot,L)$ is Lipschitz for every $L\in\mathbb{R_+}$, with Lipschitz constants uniformly bounded in $L$\footnote{{This assumption is not verified for the Cobb-Douglas type functions which, however, can be treated ad hoc in this framework.}};
\item $\varphi\colon[0,1]\to[0,1]$ is increasing and $\varphi(1)=1$;
\item $\alpha,\beta,\gamma,e,\xi\colon[0,\bar a]\to \mathbb{R}_+$ are in $L^2(0,\bar a)$;
  \item $D:\mathbb{R}\to\mathbb{R}$ is positive and concave;
  \item $\delta>0$, $\nu \in [0,1]$;
    \item $u\colon\mathbb{R}\times[0,1]\to\mathbb{R}$ is positive, continuous and increasing in both variables;
  \end{enumerate}
\end{Hypothesis}

We now start rewriting the system (\ref{eq:state-controlled}) but first we introduce an important notational standard.

\begin{Notation}\label{not:infdim}
In the system (\ref{eq:state-controlled}) the three state trajectories, $s(\cdot,\cdot)$, $i(\cdot,\cdot)$, $r(\cdot,\cdot)$ are seen as function of two variables, i.e.
$$
(s,i,r)(\cdot,\cdot):[0,+\infty) \times [0,\bar a]\to \R^3_+,
\qquad (t,a) \to \left(s(t,a),i(t,a),r(t,a)\right)
$$
However now it is convenient to see such trajectories as functions from
$t \in \R_+$ to a suitable infinite dimensional Hilbert space $H$ of functions in the variable $a\in [0,\bar a]$ with values in $\R^3$. $H$ can be seen also as the product of three Hilbert spaces of functions with values in $\R$ and its generic element will be denoted by $h=(h_1,h_2,h_3)$ or, if no confusion is possible, by $(s,i,r)$. To avoid misunderstandings we will denote the state trajectories putting a hat over the original name, i.e.
$$
(\hat s,\hat i,\hat r): \R_+ \to H
\qquad t \to \left(\hat s(t),\hat i(t),\hat r(t)\right)
$$
Sometimes we write $\hat h$ for $(\hat s,\hat i,\hat r)$
and, when we want to underline that they are functions we write
$\hat h(\cdot)$ or $(\hat s(\cdot),\hat i(\cdot),\hat r(\cdot))$.
Now observe that, for every $t\ge 0$,
$\hat s(t),\hat i(t),\hat r(t)$ are functions of $a$.
We will denote their value at a given $a\in [0,\bar a]$ with
$\hat s(t)[a],\hat i(t)[a],\hat r(t)[a]$ so to emphasize the different role of the two variables. Clearly we will have
$$
\hat s(t)[a]= s(t,a), \quad \hat i(t)[a]= i(t,a),
\quad \hat r(t)[a]= r(t,a).
$$
The same will be done for the controls strategies
$c(\cdot,\cdot)$, $\theta(\cdot,\cdot)$, $\eta(\cdot,\cdot)$.
More precisely we will fix a control space $Z$ of functions in the variable $a\in [0,\bar a]$ with values in $\R^3$. Also $Z$ can be seen as the product of three Hilbert spaces of functions with values in $\R$ and its generic element will be denoted by $z=(z_1,z_2,z_3)$ or, if no confusion is possible, by $(c,\theta,\eta)$. Also here, to avoid misunderstandings, we will denote the control trajectories putting a hat over the original name, i.e. we call
$\hat c, \hat \theta, \hat \eta$ the functions
$$
(\hat c,\hat \theta,\hat \eta): \R_+ \to Z
\qquad t \to \left(\hat c(t),\hat \theta(t),\hat \eta(t)\right)
$$
such that
$$
\hat c(t)[a]= c(t,a), \quad \hat \theta(t)[a]= \theta(t,a),
\quad \hat \eta(t)[a]= \eta(t,a).
$$
Sometimes we write $\hat z$ for $(\hat c,\hat \theta,\hat \eta)$
and, when we want to underline that they are functions we write
$\hat z(\cdot)$ or $(\hat c(\cdot),\hat \theta(\cdot),\hat \eta(\cdot))$.
\hfill\qedo
\end{Notation}

We are now ready to introduce the spaces $H$, $Z$, and the space
$\calz_0$ of basic control strategies (this is not the space of admissible control strategy which will have to take account of the state constraints that we will introduce below).

\color{black}

Define the probability of surviving to age $a$ for a susceptible individual as
\begin{equation*}
  \pi_S(a)=\exp\left(-\int_0^a\mu_S(\tau)d\tau\right)
\end{equation*}
and, similarly, define the probability of surviving to age $a$ for a recovered individual as
\begin{equation*}
  \pi_R(a)=\exp\left(-\int_0^a\mu_R(\tau)d\tau\right)\ .
\end{equation*}
Consider the set
\begin{equation*}
  H=\left\{h\in L^2(0,\bar a;\R^3)\colon \frac{h_1}{\pi_S}\in L^2(0,\bar a),h_2\in L^2(0,\bar a), \frac{h_3}{\pi_R}
  \in L^2(0,\bar a)\right\}\; .
  \end{equation*}
$H$ is a Hilbert space when endowed with the inner product
\begin{equation*}
  \begin{aligned}
    \langle h,g\rangle_H &= \langle \frac{h_1}{\pi_S},\frac{g_1}{\pi_S}\rangle_{L^2}+\langle h_2,g_2\rangle_{L^2}+\langle\frac{h_3}{\pi_R},\frac{g_3}{\pi_R}\rangle_{L^2}\\
    &=:\langle h_1,g_1\rangle_{\pi_S}+\langle h_2,g_2\rangle_{L^2}+\langle h_3,g_3\rangle_{\pi_R}\ .
  \end{aligned}
\end{equation*}
\begin{Remark}
The choice of the space $H$ is different from the standard one made, e.g., in Iannelli and Martcheva, 2003. Indeed it would be standard to put the weight also on the second component. however this is not possible since, in our model, we have the new and important feature that the mortality force $\mu_I$ is state dependent. With our choice the space $H$ is bigger than the usual one but it is still possible to formulate the problem there. We finally observe that this choice will reflect also in the form of the adjoint operator in Proposition \ref{pr:SE}.  \hfill \qedo
\end{Remark}

It is useful, for later purpose to introduce the positive cone in $H$ as follows
\begin{equation*}
  H_+=\left\{h\in H\colon h_i\geq 0\text{ a.e. in }[0,\bar{a}]\right\}\;\subset \; H.
\end{equation*}
The control space $Z$ is given as:
\begin{equation*}
Z=\left\{z=(z_1,z_2,z_3)\colon\, z_i \in L^2(0,\bar a), i=1,2,3; z_1(a)\ge 0,\, z_2(a),\, z_3(a)\in [0,1],\; \forall a \in[0,\bar a]\right\}\; .
  \end{equation*}
$H$ is a Hilbert space when endowed with the inner product
\begin{equation*}
  \begin{aligned}
    \langle z,w\rangle_Z &= \langle z_1,w_1\rangle_{L^2}+\langle z_2,w_2\rangle_{L^2}+\langle z_3,w_3\rangle_{L^2}.
  \end{aligned}
\end{equation*}
Finally the space $\calz_0$ of all basic control strategies is, coherently with the requirements of Section 2, a space of functions from $\R_+$ to $Z$ and is given as follows
$$
\calz_0:= L^2(\mathbb{R}_+;Z);
$$
For coherence with the Notation \ref{not:infdim}
introduced above we will call $z=(c,\theta,\eta)$ the points of $Z$.
and $\hat z=(\hat c,\hat\theta,\hat\eta)$ the points of $\calz_0$.

Now we reformulate system (\ref{eq:state-controlled}) providing also existence and uniqueness of the solution. We need to introduce some operators which comes from the various term of the system.

First we introduce the unbounded linear operator $A:D(A)\subset H\to H$ defined as
\begin{equation*}
  A=\begin{pmatrix} -\frac{\partial}{\partial a}-\mu_S&0&0\\0&-\frac{\partial}{\partial a}-\gamma&0\\0&\gamma&-\frac{\partial}{\partial a}-\mu_R\end{pmatrix}
\end{equation*}
with domain
\begin{multline*}
  D(A)=\left\{h\in H\colon \frac{h_1}{\pi_S},h_2,\frac{h_3}{\pi_R}\in W^{1,2}(0,\bar{a}),\right.\\ \left.h_1(0)=\int_0^{\bar a}\beta(a)(h_1+h_2+h_3)(a)da, h_2(0)=h_3(0)=0\right\}\ ,
\end{multline*}
corresponding to the linear part of system \eqref{eq:state-controlled}.
It can be shown as in Iannelli and Martcheva (2003) that $A$ generates a strongly continuous semigroup $T(t)$ on $H$ such that $T(t)(H_+)\subset(H_+)$ for every $t\geq 0$.\\

Second we define linear functional $\bar \Xi$ which reformulates the function $\Xi$ given in Section 2.
\begin{gather}\label{eq:Xiop}
\bar \Xi\colon H \to \mathbb{R}\\
  \bar \Xi(h)=\int_0^{\bar a} h_2(a)\xi(a)da.
\end{gather}
For every control point $z=(c,\theta,\eta)\in Z$ we define the nonlinear operators (depending only on the components $\theta$ and $\eta$ of the control point)
\begin{gather*}
  \Lambda^{\theta,\eta}\colon H_+\setminus\{0\}\to L^\infty(0,\bar a)\\
  \Lambda^{\theta,\eta}(h)(a)=\frac{\theta(a)}{\int_0^{\bar a}(h_1+h_2+h_3)(\tau)d \tau} \int_0^{\bar a}m(a,\tau)\theta(\tau)\eta(\tau)h_2(\tau)d\tau
\end{gather*}
and
$$B^{\theta,\eta}:H_+\setminus\{0\}\to H,$$
\begin{equation*}
  B^{\theta, \eta}\left(h\right)(a)=\begin{pmatrix} -\Lambda^{\theta,\eta}(h)(a)h_1(a)\\ \Lambda^{\theta,\eta}(h)h_1(a)-\mu_I(a,\Xi(h))h_2(a)\\0\end{pmatrix}\ .
\end{equation*}
Now by Hypothesis \ref{hp:main}(ii) and by the fact that (see the definition of $Z$) we have $\theta,\eta\in L^\infty(0,\bar a)$, the operator $B^{\theta,\eta}$ is
Lipschitz continuous on $H_+\setminus\{0\}$ and there exists a positive constant $\alpha$ such that $\alpha B^{\theta,\eta}(h)+h\in H_+$ for every $h\in H_+$ (see Iannelli and Martcheva, 2003).

We can consequently write system (\ref{eq:state-controlled}) as the evolution equation for the unknown $\hat h\colon[0,+\infty)\to D(A)$
with control strategies $\hat \theta$ and $\hat \eta$:
\begin{equation}
  \label{eq:ODE}
  \frac{d}{dt}\hat h(t)=A\hat h(t)+B^{\hat \theta(t),\hat \eta(t)}(\hat h(t))\ .
\end{equation}
Given control strategies $\hat\theta(\cdot),\hat\eta(\cdot)$ and an initial condition $h_0\in H_+$ we look for mild solutions of the above systems in $H_+$, i.e., for functions $[0,+\infty)\ni t\mapsto \hat h(t)\in H_+$ that satisfy
\begin{equation*}
  \hat h(t)=T(t) h_0+\int_0^tT(t-s)B^{\hat\theta(s),\hat\eta(s)}(\hat h(s)) ds
  \end{equation*}
Thanks to the fact that $A$ generates a strongly continuous semigroup that leaves $H_+$ invariant and thanks to the properties of $B$, there exists (see e.g. Bensoussan et al. (2007) a unique function $\hat h(\cdot)$ that satisfies (\ref{eq:ODE}) and such that $\hat h(0)=h_0$ and $\hat h(t)\in H_+$ for every $t\in[0,+\infty)$. Such solution will be denoted by $\hat h^{\hat\theta,\hat\eta;h_0}$
or by $\hat h^{\hat z;h_0}$.

We now add the equation for $K$ to the system, see (\ref{eq:K}). For
control points $z=(c,\theta, \eta)\in Z$
we define the functionals on $H$
\begin{equation*}
  L^{\theta}(h)=\int_0^{\bar a}\left( h_1(a)+h_3(a)\right)\alpha(a)\varphi(\theta(a))da,
\end{equation*}
\begin{equation*}
  C^{c}(h)=\int_0^{\bar a}c(a)(h_1+h_2+h_3)(a)da,
\end{equation*}
\begin{equation*}
  D^{\eta}(h)=D\left(\int_0^{\bar a}\eta(a)h_2(a)e(a)da\right).
\end{equation*}
For any control strategy $\hat z(\cdot)\in \calz_0$, any $h_0\in H_+$ and any $K_0\in\mathbb{R}$ we are then considering the Cauchy problem
\begin{equation}
  \label{eq:hK1}
  \begin{cases}
    \hat h'(t)&=A\hat h(t)+B^{\hat\theta(t),\hat\eta(t)}(\hat h(t))\\
    K'(t)&=-\delta K+F\left(K(t),L^{\hat\theta(t)}(\hat h(t))\right)-C^{\hat c(t)}(\hat h(t))-D^{\hat\eta(t)}(\hat h(t)),\\
    \hat h(0)&=h_0,\\
    K(0)&=K_0.
   \end{cases}
\end{equation}
Observe that the first equation does not depend on $K$. Hence, once we know the mild solution $h^{\hat z;h_0}$ of the first equation we can plug it into the second one. Since $F$ is Lipschitz in $K$, uniformly in the second variable\footnote{The fact the two equations are not fully coupled can be exploited to cover also the case when $F$ is a Cobb-Douglas function by using a Bernoulli-type change of variable.}, we know that the second equation has a solution $K^{\hat z;K_0,h_0}$ (note that it depends also on $h_0$ since the trajectory $\hat h$ appears in the second equation).
We then conclude that, for every
$\hat z=(\hat c,\theta,\eta)\in\calz_0$ the above system admits a unique solution
$\left(h^{\hat z;h_0},K^{\hat z;K_0,h_0}\right)$
such that $h^{\hat z;h_0}$ is the mild solution of the first equation with initial datum $h_0$ and $h(t)\in H_+$ for every $t\geq 0$. We can write system \eqref{eq:hK1} in a more compact way as
\begin{equation}
  \label{eq:hK2}
  \begin{cases}
    \frac{d}{dt}(\hat h,K)(t)&=
    \widetilde{A}(\hat h(t),K(t))+
    \widetilde{B}^{\hat z(t)}\left(\hat h(t),K(t)\right), \quad t \ge 0\\
    (\hat h,K)(0)&=(h_0,K_0) \in H_+ \times \R
  \end{cases}
\end{equation}
where $\widetilde{A}\colon D(A)\times\mathbb{R}\to H\times\mathbb{R}$ is the linear operator defined by
\begin{equation*}
  \widetilde{A}(h,K)=\left(Ah,-\delta K\right)
\end{equation*}
and, for $z=(c,\theta,\eta)\in Z$,
$\widetilde{B}^{z}=\widetilde{B}^{c,\theta,\eta}\colon H\times\mathbb{R}\to H\times\mathbb{R}$ is given by
\begin{equation*}
\widetilde{B}^{z}(h,K)=
\left(B^{\theta,\eta}(h),F(K,L^\theta(h))-C^c(h)-D^\eta(h)\right)\ .
\end{equation*}
The following proposition follows from basic material in Iannelli and Martcheva (2003), Iannelli (1995), Bensoussan et al. (2007).

\begin{Proposition}\label{pr:SE}
The linear operator $\widetilde{A}$ generates a strongly continuous semigroup $\widetilde{T}(t)$ on $H\times\mathbb{R}$ that leaves $H_+\times\mathbb{R}$ invariant, while the operator $B^{z}$ is Lipschitz. The Cauchy problem (\ref{eq:hK2}) admits a unique mild solution, that coincides with that of (\ref{eq:hK1}).\\
The adjoint operator of $\widetilde{A}$ with respect to the inner product $\langle (h,K),(p,Q) \rangle_{H\times\mathbb{R}}:=\langle h,p\rangle_H+KQ$, that is the linear operator $\widetilde{A}^\ast\colon \left(D(A^\ast)\times\mathbb{R}\right)\to H\times \mathbb{R}$ given by
\begin{equation*}
  \widetilde{A}^\ast(p,Q)=\left(A^\ast p,-\delta Q\right),
\end{equation*}
where
\begin{equation*}
  A^\ast=\begin{pmatrix} \frac{\partial}{\partial_a}+\mu_s&0&0\\0&\frac{\partial}{\partial_a}-\gamma&\frac{\gamma}{\pi_R^2}\\0&0&\frac{\partial}{\partial_a}+\mu_R\end{pmatrix}
\end{equation*}
on
\begin{multline*}
  D(A^\ast)=\left\{p=(p_1,p_2,p_3)\colon \frac{p_1}{\pi_S},p_2,\frac{p_3}{\pi_R^2}\in W^{1,2}(0,\bar a),\right.\\ \left.\frac{p_1}{\pi_S}(\bar a)=p_2(\bar a)=\frac{\pi_3}{\pi_R}(\bar a)=p_1(0)=0\right\}.
\end{multline*}
\end{Proposition}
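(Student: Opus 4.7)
The proposition splits naturally into three independent claims, and I would organize the proof accordingly.

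For the first claim, I would exploit the block-diagonal structure of $\widetilde A$: it is the external direct sum of $A\colon D(A)\subset H\to H$ and the scalar operator $-\delta\colon\R\to\R$ acting on the orthogonal product $H\times\R$. By hypothesis and the already-cited result from Iannelli and Martcheva (2003), $A$ generates a $C_0$-semigroup $T(t)$ on $H$ that leaves $H_+$ invariant, and $-\delta$ generates the scalar semigroup $e^{-\delta t}$ on $\R$. Standard facts on sums of generators (e.g.\ Engel–Nagel) then give that $\widetilde A$ generates $\widetilde T(t)(h,K):=\bigl(T(t)h,\,e^{-\delta t}K\bigr)$, and the invariance of $H_+\times\R$ under $\widetilde T(t)$ is immediate. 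This step is essentially bookkeeping.

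For the second claim, the Lipschitz property of $B^{\theta,\eta}$ on $H_+\setminus\{0\}$ has been noted before the statement. The scalar component of $\widetilde B^{z}$ is $F(K,L^{\theta}(h))-C^{c}(h)-D^{\eta}(h)$; the map $h\mapsto L^{\theta}(h)$ is linear and continuous because $\alpha\varphi(\theta)\in L^\infty$, the map $h\mapsto C^{c}(h)$ is linear and continuous because $c\in L^2$, and $D$ is concave and positive, hence locally Lipschitz on bounded sets (Hypothesis \ref{hp:main}(vi)). Combined with the uniform-in-$L$ Lipschitz continuity of $F(\cdot,L)$ (Hypothesis \ref{hp:main}(iii)), this gives the required Lipschitz estimate of $\widetilde B^{z}$ on $(H_+\setminus\{0\})\times\R$. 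Existence and uniqueness of the mild solution of \eqref{eq:hK2} then follows by the classical fixed point argument for semilinear evolution equations (Bensoussan et al., 2007). The coincidence with the solution of \eqref{eq:hK1} is obtained by exploiting the triangular structure: solve the first equation for $\hat h^{\hat z;h_0}$ first, then feed it into the second, which is a scalar Cauchy problem for $K$ with Lipschitz right-hand side.

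The substantive step is the identification of $\widetilde A^{\ast}$, and this is where I expect the main obstacle, because the weighted inner product on $H$ and the nonlocal boundary condition at $a=0$ make the computation noticeably more delicate than in the classical (unweighted) McKendrick setting. Here I would fix $(h,K)\in D(A)\times\R$ and $(p,Q)\in D(A^{\ast})\times\R$ (with $D(A^\ast)$ as defined in the statement, still to be justified) and compute
\begin{equation*}
\langle \widetilde A(h,K),(p,Q)\rangle_{H\times\R}
=\langle Ah,p\rangle_H-\delta K Q,
\end{equation*}
so that the scalar part trivially yields the $-\delta Q$ entry. For the $H$-part I would integrate by parts separately on each component. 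The key identity is
\begin{equation*}
\left(\frac{p_1}{\pi_S^2}\right)'=\frac{p_1'+2\mu_S p_1}{\pi_S^2},
\qquad
\left(\frac{p_3}{\pi_R^2}\right)'=\frac{p_3'+2\mu_R p_3}{\pi_R^2},
\end{equation*}
(using $\pi_S'=-\mu_S\pi_S$ and $\pi_R'=-\mu_R\pi_R$), which combined with the terms $-\mu_S h_1 p_1/\pi_S^2$ and $-\mu_R h_3 p_3/\pi_R^2$ produces the diagonal entries $\partial_a+\mu_S$ and $\partial_a+\mu_R$ of $A^{\ast}$. The off-diagonal entry $\gamma/\pi_R^2$ in the second row of $A^\ast$ comes from transferring the $\gamma h_2$ term that appears in the third component of $Ah$ into the second component of $A^{\ast}p$, against the weight $1/\pi_R^2$ of the third inner product.

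The boundary terms deserve care. At $a=\bar a$ the contributions $h_1(\bar a)p_1(\bar a)/\pi_S^2(\bar a)$, $h_2(\bar a)p_2(\bar a)$, $h_3(\bar a)p_3(\bar a)/\pi_R^2(\bar a)$ are killed by the three adjoint boundary conditions $(p_1/\pi_S)(\bar a)=p_2(\bar a)=(p_3/\pi_R)(\bar a)=0$. At $a=0$ the terms involving $h_2(0)$ and $h_3(0)$ vanish automatically because $D(A)$ enforces $h_2(0)=h_3(0)=0$. The remaining boundary contribution $h_1(0)p_1(0)$ is nonlocal: $h_1(0)=\int_0^{\bar a}\beta(h_1+h_2+h_3)\,da$ would otherwise couple to all three components of $h$ with the weight-incompatible factor $\beta$. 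The condition $p_1(0)=0$ in $D(A^\ast)$ is precisely what is needed to kill this term and identify $\widetilde A^\ast$ as announced. Finally, once the adjoint is computed on $D(A^\ast)\times\R$, I would argue that this domain is indeed maximal (i.e.\ it is the full adjoint domain) by a standard density and closability argument, mirroring the analogous argument in Iannelli and Martcheva (2003), with the only new ingredient being the different weight structure on the third component.
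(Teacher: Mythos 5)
Your proposal is correct and follows exactly the route the paper intends: the paper offers no written proof for this proposition, only the remark that it ``follows from basic material in Iannelli and Martcheva (2003), Iannelli (1995), Bensoussan et al.\ (2007)'', and your argument is the natural unpacking of those citations (block-diagonal semigroup generation, Lipschitz fixed-point for the mild solution exploiting the triangular structure, and integration by parts against the weighted inner product for the adjoint). Your computation of $A^\ast$ checks out, including the role of $p_1(0)=0$ in killing the nonlocal birth term and of the conditions at $\bar a$ read as traces of $p_1/\pi_S$, $p_2$, $p_3/\pi_R$ (note the paper's domain statement contains the typo $\frac{\pi_3}{\pi_R}$ for $\frac{p_3}{\pi_R}$); the only point left at sketch level, which you correctly flag, is the maximality of $D(A^\ast)$.
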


Now we are in position to define precisely the set of admissible control strategies and to rewrite the target functionals. First of all, due to the presence of positivity constraints both on $\hat h$ and $K$ the set of admissible control strategies depends on the initial data and is the set
\begin{equation}\label{eq:Zdef}
  \calz_{ad}(h_0,K_0)=\left\{\hat z(\cdot)\in \calz\colon
  \left(h^{\hat z;h_0},K^{\hat z;K_0,h_0}\right)(t)\in H_+\times\mathbb{R_+}\text{ for a.e. }t\in[0,+\infty)\right\}.
\end{equation}
We now rewrite the target functionals starting by \eqref{eq:target1infhor}.
Define the function $J_1:H_+\times \mathbb{R} \times \calz_{0} \to\mathbb{R}$,
\begin{equation}\label{eq:J1def}
  J_1(h_0,K_0;\hat z)=\int_0^\infty e^{-\rho t}\int_0^{\bar a}
  \left(\hat h_1^{\hat z;h_0}(t)[a] + \hat h_2^{\hat z;h_0}(t)[a] + \hat h_2^{\hat z;h_0}(t)[a] \right)^\gamma
  u\left(\hat c(t)[a],\hat \theta(t)[a]\right)da dt.
\end{equation}
Then for every given initial datum $(h_0,K_0) \in \H_+ \times \R_+$ the problem of maximizing \eqref{eq:target1infhor} in Section 2
translate precisely in maximizing the function $J_1(h_0,K_0; \cdot)$ given \eqref{eq:J1def} over the set $\calz_{ad}(h_0,K_0)$.
It will be useful, as a shorthand, to define the function $U\colon H_+\times Z \to\mathbb{R}$,
\begin{equation}\label{eq:U1def}
U_1(h;z)=  U_1(h;c,\theta)=\int_0^{\bar a}(h_1(a)+h_2(a)+h_3(a))^\gamma u\left(c(a),\theta(a)\right)da\ ,
\end{equation}
 so that
\begin{equation}\label{eq:J1conU1def}
J_1\left(h_0,K_0;\hat z\right)
=
\int_0^\infty e^{-\rho t}
U_1\left(h^{\hat z;h_0}(t); \hat c(t),\hat \theta(t)\right)dt.
\end{equation}
The other infinite horizon problem of Section 2 have the same set of admissible strategies, hence to define it precisely it is enough the rewrite the corresponding target functional. To do this we simply have to change the function $J_1$. The target \eqref{eq:target2infhor} can be rewritten defining the functional $J_2$ in the form \eqref{eq:J1conU1def} simply substituting $U_1$
from \eqref{eq:U1def} with $U_2$ defined as follows (recall the definition of $L(\cdot)$ given in \eqref{eq:laboursupply})
\begin{equation}\label{eq:U2def}
  U_2(h,K;\theta)=F\left(K,\int_0^{\bar a}(h_1(a)+h_3(a))\alpha(a)\varphi(\theta(a))da\right)\ ,
\end{equation}
The \emph{value functions} of the two maximization problems described above are defined as
\begin{equation}\label{eq:valfuninfhor}
V_i(h,K)\colon = \sup_{\hat z\in \calz_{ad}(h,K)}J_i(h,K;\hat z),\qquad i=1,2.
\end{equation}

The other four functionals of Section 2.3 are taken with finite horizon $T>0$.
It is useful, to apply the dynamic programming approach, to let also the initial time vary.
Hence, when studying these targets the initial condition of the
state equation \eqref{eq:hK2} is taken at a generic time $t_0\in [0,T]$:
\begin{equation}
  \label{eq:hK2t0}
  \begin{cases}
    \frac{d}{dt}(\hat h,K)(t)&=
    \widetilde{A}(\hat h(t),K(t))+
    \widetilde{B}^{\hat z(t)}\left(\hat h(t),K(t)\right), \quad t \in [t_0,T]
    \\
    (\hat h,K)(t_0)&=(h_0,K_0) \in H_+ \times \R
  \end{cases}
\end{equation}
This is the state equation in this case and its solution (which exists and is unique thanks to Proposition \ref{pr:SE}) is denoted by $\left(h^{\hat z;t_0,h_0},K^{\hat z;t_0,K_0,h_0}\right)$. The set of admissible control is also a bit different:
\begin{equation}\label{eq:Zdeffinhor}
  \calz_{ad}(t_0,h_0,K_0)=\left\{\hat z(\cdot)\in \calz\colon
  \left(h^{\hat z;t_0,h_0},K^{\hat z;t_0,K_0,h_0}\right)(t)\in H_+\times\mathbb{R_+}\text{ for a.e. }t\in[t_0,T]\right\}.
\end{equation}
For the same reason also the lower extremum of the integral
of the target is taken at a generic time $t_0\in [0,T]$.
It follows that the target functional and the value function also depend on $t_0$.

To rewrite target \eqref{eq:target1finhor} we then define
\begin{equation}\label{eq:J3def}
J_3(t_0,h_0,K_0;\hat z)=
F\left(K^{\hat z;t_0,h_0,K_0}(T),
\int_0^{\bar a}\left(\hat h_1^{\hat z;t_0,h_0}(T)[a]
  + \hat h_2^{\hat z;t_0,h_0}(T)[a]
  + \hat h_3^{\hat z;t_0,h_0}(T)[a]\right) \alpha(a)da\right)
  \end{equation}
The target \eqref{eq:target2finhor} can be rewritten as
\begin{equation}\label{eq:J4def}
  \hat J_4(t_0,h_0,K_0;\hat z)=
  K^{\hat z;t_0,h_0,K_0}(T)
  \end{equation}

While the above two targets only contain a final reward, the next two contain only a current reward. To rewrite target \eqref{eq:target2finhor} we set
\begin{equation}\label{eq:J5def}
J_5\left(h_0,K_0;\hat z\right)
=
\int_0^T e^{-\rho t}
U_2\left(h^{\hat z;h_0}(t); \hat c(t),\hat \theta(t)\right)dt.
\end{equation}
where $U_2$ is defined in \eqref{eq:U2def}
Finally, to rewrite target \eqref{eq:target4finhor} we define
\begin{equation}\label{eq:J6def}
J_6(t_0,h_0,K_0;\hat z)=\int_{t_0}^T
  \int_0^{\bar a}
  \mu_I\left(a,\Xi(\hat h^{\hat z;t_0,h_0}_2(t))\right)
  \hat h^{\hat z;t_0,h_0}_2(t)[a]da dt.
\end{equation}
or simply $J_6(t_0,h_0,K_0;\hat z)=\int_{t_0}^T U_3(\hat h_2^{\hat z;t_0,h_0})dt$,
where we set
\begin{equation}\label{eq:U1def}
  U_3(h)=\int_0^{\bar a}\mu_I\left(a,\Xi(h_2)\right)h_2(a)da\ .
\end{equation}
Note that here above $\Xi:L^2(0,\bar a)\to \mathbb R$ is the linear functional
given by $\Xi(h_2)=\int_{0}^{\bar a}h_2(a)\xi(a)da$, as from \eqref{eq:Xiop}.
The \emph{value functions} in the above four finite horizon cases are defined as:
\begin{equation*}
V_i(t,h,K)\colon = \sup_{\hat z\in \calz_{ad}(t,h,K)}J_i(t,h,K;\hat z), i=3,4,5,6.
\end{equation*}

%
%

%

\color{black}

\section{Dynamic Programming and HJB equations}
\label{sec:HJB}

The starting point of the dynamic programming approach to the problems of this paper is the Dynamic Programming Principle, which we call DPP from now on, (see e.g. Theorem 1.1, p. 224 of Li and Yong, 1995, for a statement and a proof which apply to this case) which is a functional equation for the value function.
Once DPP is established the standard path is to write the differential form of DPP, the HJB equation, find a solution $v$ of it, and prove a Verification Theorem i.e. a sufficient condition for optimality in terms of the function $v$ (which can be then proved to be the value function) and its derivatives.
Both steps may be very complicated, depending on the features of the problem; this is particularly true when one deals with
problems in infinite dimension. Indeed, while
for finite dimensional problems the theory of HJB equations and of the corresponding verification results is quite well established with many regularity results, this is not the case for infinite dimensional problems. Indeed only few results are available and each case must be treated ad hoc. One can see, for example, Theorem 5.5, p.263 of Li and Yong (1995) and the papers Faggian and Gozzi (2010), Fabbri et al. (2010).

Here we abstract away from the existence and uniqueness of regular solutions of the HJB equation (which is a challenging subject and which will be next step of our work) and we concentrate on Verification Theorems and their consequences.

Since we formulated various different problems with different targets, here we concentrate on the targets \eqref{eq:target1infhor} and \eqref{eq:target4finhor} simply observing that the results can be easily extended to the other cases.

Consider first the problem of maximizing, for every initial datum $(h_0,K_0)\in \H_+\times \R_+$ the target \eqref{eq:target1infhor} over all $\hat z\in \calz_{ad}(h_0,K_0)$.
Formally, the Hamilton-Jacobi-Bellman (HJB) equation associated to such control problem is (the unknown here is $v:H_+\times \R_+\to \R$)
\begin{equation}
  \label{eq:HJB}
  \rho v(h,K)=\sup_{z\in Z}
  \mathbb{H}_{CV}(h,K,D_h v(h,K),D_K v(h,K);z)
\end{equation}
where the so-called Current Value Hamiltonian is defined as
$$
\mathbb{H}_{CV}\colon \left((D({A})\cap H_+)\times\mathbb{R}\right)\times (H\times\mathbb{R})\times Z\to\mathbb{R}
$$
\begin{equation}\label{eq:HCVdef}
    \mathbb{H}_{CV}(h,K,p,Q;z)=
    \langle \widetilde{A}(h,K),(p,Q)\rangle_{H\times\mathbb{R}}
    +\langle\widetilde{B}^{z}(h,K),(p,Q)\rangle_{H\times\mathbb{R}}
    +U_1(h;z)
\end{equation}
However this form of the HJB equation is not very convenient for two main reasons.
\begin{itemize}
  \item First of all the unknown is defined only in $H_+\times \R_+$, because $U_1$ is defined in $H_+$. This is a serious problem since the set $H_+$ has empty interior in $H$ and this creates problems in defining properly the Fr\'echet derivative $D_h v$.
      To overcome this problem we observe that $U_1$ can be immediately extended to the half space (here ${\bf 1}$ is the function with constant value $(1,1,1)$ on $H$)
\begin{equation}\label{eq:defH1}
      H_+^1:=\left\{h \in H\,\colon \; \<h,{\bf 1}\>\ge 0
      \right\}.
\end{equation}
      Indeed the interior part of this set in $H$ is simply
      $$
      Int H_+^1:=\left\{h \in H\,\colon \; \<h,{\bf 1}\> > 0
      \right\}.
      $$
      Note that in this way we are enlarging the positivity constraint on the variables $(s,i,r)$ so the resulting equation is the one associated to a different problem with a greater value function which we call $V_1^1$. However, as explained, e.g., in the appendix of Boucekkine et al. (2019), this would allow to solve also the original one if the resulting optimal strategies satisfies the original constraints (i.e. the corresponding state trajectory $\hat h $ stays in $H_+$).
  \item Second, the term $ \widetilde{A}(h,K)$ creates problem since it requires $h \in D(A)$ which is not in general satisfied when we take the mild solution $\hat{h}$ of the equation \eqref{eq:ODE}. Hence it is better to bring the operator $\widetilde{A}$ on the other side of the inner product. The drawback of this is that we need to require an additional regularity for the solution $v$: that $D_hv$ belong to $D(A^*)$ (see Definition \ref{sef:solHJB} below).
\end{itemize}

We then consider the unknown $v$ defined on $H_+^1\times \R_+$
and modify the Current Value Hamiltonian as follows (we keep the same name of it since we will be using only the following one from now on)
$$
\mathbb{H}_{CV}\colon \left(H^1_+\times\mathbb{R}\right)\times ((D(A^*)\times\mathbb{R})\times Z\to\mathbb{R}
$$
\begin{equation*}
  \begin{aligned}
    \mathbb{H}_{CV}&(h,K,p,Q;z)=
    \langle (h,K), \widetilde{A}^\ast (p,Q)\rangle_{H\times\mathbb{R}}
    +\langle\widetilde{B}^{z}(h,K),(p,Q)\rangle_{H\times\mathbb{R}}
    +U_1(h;z)\\
    &=\langle h_1, \frac{\partial p_1}{\partial a}+\mu_S p_1\rangle_{L^2_{\pi_S}}
    +\langle h_2, \frac{\partial p_2}{\partial a}-\gamma p_2+\frac{\gamma}{\pi_R^2}p_3\rangle_{L^2}
    +\langle h_3,\frac{\partial p_3}{\partial a}+\mu_R\pi_3\rangle_{L^2_{\pi_R}}
    -\delta KQ
    \\
    &-\langle \Lambda^{\theta,\eta}(h)h_1,p_1\rangle_{L^2_{\pi_S}}
    +\langle \Lambda^{\theta,\eta}(h)h_1,p_2\rangle_{L^2}
    -\langle \mu_I\left(\cdot,\Xi(h)\right)h_2, p_2\rangle_{L^2}
    +F(K,L^\theta(h))Q
    \\
    &-C^c(h)Q-D^\eta(h)Q
    +\langle(h_1+h_2+h_3)^\gamma(\cdot),u(z(\cdot))\rangle_{L^2}.
  \end{aligned}
\end{equation*}
We denote by $\mathbb{H}_{CV}^1$ the part of the Hamiltonian that depends on the controls, i.e.
 \begin{equation*}
   \begin{aligned}
     \mathbb{H}_{CV}^1(h,K,p,Q;z)&=
     -\langle \Lambda^{\theta,\eta}(h)h_1,p_1\rangle_{L^2_{\pi_S}}
     +\langle \Lambda^{\theta,\eta}(h)h_1,p_2\rangle_{L^2}
     +F(K,L^\theta(h))Q\\
     &-C^c(h)Q-D^\eta(h)Q+
     \langle(h_1+h_2+h_3)^\gamma(\cdot),u(z(\cdot))\rangle_{L^2}.
   \end{aligned}
 \end{equation*}
and set
\begin{align}\label{eq:H0def}
&\mathbb{H}^0=\mathbb{H}_{CV}-\mathbb{H}_{CV}^1
\\
\nonumber
&=\langle h_1, \frac{\partial p_1}{\partial a}+\mu_S p_1\rangle_{L^2_{\pi_S}}
    +\langle h_2, \frac{\partial p_2}{\partial a}-\gamma p_2+\frac{\gamma}{\pi_R^2}p_3\rangle_{L^2}
    +\langle h_3,\frac{\partial p_3}{\partial a}+\mu_R\pi_3\rangle_{L^2_{\pi_R}}
\\
\nonumber
&    -\delta KQ
-\langle \mu_I\left(\cdot,\Xi(h)\right)h_2, p_2\rangle_{L^2}
\end{align}
so that
\begin{multline*}
  \sup_{z\in Z} \mathbb{H}_{CV}(h,K,p,Q;z)
  =\mathbb{H}^0(h,K,p,Q)
  +\sup_{z\in Z} \mathbb{H}_{CV}^1(h,K,p,Q;z)\ .
\end{multline*}
Finally we call $\mathbb{H}^1(h,K,p,Q):=\sup_{z\in Z} \mathbb{H}_{CV}^1(h,K,p,Q;z)$
so the HJB equation \eqref{eq:HJB} rewrites as
\begin{equation}
  \label{eq:HJBnew}
  \rho v(h,K)=
  \mathbb{H}^0(h,K,p,Q)+\mathbb{H}^1(h,K,p,Q)
  \end{equation}

Now we give the definition of classical solution of \eqref{eq:HJBnew}
in the interior of our enlarged state space $H^1_+ \times \R$.
Here we abstract away from the boundary conditions as they will not be crucial for our purposes. Clearly they will become a key point when we want to prove results on existence/uniqueness/regularity of solutions of \eqref{eq:HJBnew}.

\begin{Definition}
\label{sef:solHJB}
  We say that a function
\begin{equation*}
  v\colon Int H^1_+\times (0,+\infty) \longrightarrow\mathbb{R}
\end{equation*}
is a \emph{classical solution} of the HJB equation (\ref{eq:HJB}) if
\begin{enumerate}[label=($\roman{*}$)]
\item $v$ is continuously Fr\'echet differentiable in
  $Int H_+^1\times (0,+\infty) $;
\item the derivative $D_hv(h,K)$ belongs do $D(A^\ast)$ for every $(h,K)\in Int H^1_+\times (0,+\infty) $ and $A^\ast D_hv$ is continuous in $Int H^1_+\times (0,+\infty) $;
\item $v$ satisfies equation (\ref{eq:HJB}) for every $(H,k)\in Int H_+^1\times (0,+\infty) $.
\end{enumerate}
\end{Definition}

We have the following result, which generalizes, e.g., Proposition 1.2, p. 225 of Li and Yong (1995).

\begin{Theorem}\label{th:V}
Consider the problem of optimizing the target functional \eqref{eq:J1def} over the set of control strategies
\begin{equation}\label{eq:Z1def}
  \calz^1_{ad}(h_0,K_0)=\left\{\hat z(\cdot)\in \calz\colon
  \left(h^{\hat z;h_0},K^{\hat z;K_0,h_0}\right)(t)\in H^1_+\times\mathbb{R_+}\text{ for a.e. }t\in[0,+\infty)\right\},
\end{equation}
where $H^1_+$ is defined as in \eqref{eq:defH1}.
Suppose that the value function $V^1_1$ of this ``enlarged'' problem is continuously Fr\'echet differentiable in $Int H^1_+\times (0,+\infty)$ and that $D_hV(h,K)\in D(A^\ast)$ for every $(h,K)\in Int H^1_+\times (0,+\infty)$. Then $V$ is a classical solution of the Hamilton-Jacobi-Bellman equation (\ref{eq:HJB})
in $Int H^1_+\times (0,+\infty)$ .
\end{Theorem}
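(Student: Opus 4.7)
The plan is to deduce the HJB equation from the Dynamic Programming Principle (DPP) for the enlarged problem on $H^1_+\times\R_+$: for every $(h_0,K_0)\in Int H^1_+\times (0,+\infty)$ and $t>0$,
\[
V^1_1(h_0,K_0)=\sup_{\hat z\in\calz^1_{ad}(h_0,K_0)} \left\{\int_0^t e^{-\rho s}U_1(\hat h^{\hat z;h_0}(s);\hat c(s),\hat\theta(s))\,ds+ e^{-\rho t}V^1_1\bigl(\hat h^{\hat z;h_0}(t),K^{\hat z;K_0,h_0}(t)\bigr)\right\}.
\]
Under the standing hypotheses the DPP follows as in Li and Yong (1995), Theorem~1.1, p.~224. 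From it I would derive the sub- and supersolution inequalities separately at any fixed $(h_0,K_0)\in Int H^1_+\times(0,+\infty)$, then combine.

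For the \emph{subsolution} direction I fix an arbitrary constant control $\hat z(\cdot)\equiv z=(c,\theta,\eta)\in Z$ (admissible for short times because the state set is open), apply the DPP with the inequality ``$\ge$'', rearrange and let $t\to 0^+$:
\[
\frac{V^1_1(h_0,K_0)-e^{-\rho t}V^1_1(\hat h^{z;h_0}(t),K^{z;K_0,h_0}(t))}{t} \;\ge\; \frac{1}{t}\int_0^t e^{-\rho s}U_1\bigl(\hat h^{z;h_0}(s);c,\theta\bigr)\,ds.
\]
The right-hand side tends to $U_1(h_0;c,\theta)$ by continuity of mild solutions and of $U_1$. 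For the left-hand side I would use Fr\'echet differentiability of $V^1_1$ together with the mild formula for $\hat h^{z;h_0}(t)-h_0$. The decisive point is that $h_0\notin D(A)$ in general, so $(T(t)-I)h_0/t$ does not converge in $H$; nevertheless, since $D_h V^1_1(h_0,K_0)\in D(A^\ast)$,
\[
\frac{\bigl\langle D_hV^1_1,(T(t)-I)h_0\bigr\rangle_H}{t}=\frac{\bigl\langle(T^\ast(t)-I)D_hV^1_1,h_0\bigr\rangle_H}{t}\longrightarrow \bigl\langle A^\ast D_hV^1_1(h_0,K_0),h_0\bigr\rangle_H,
\]
and the Duhamel term contributes $\langle D_h V^1_1,B^{\theta,\eta}(h_0)\rangle$ by strong continuity of $T$ and continuity of $B^{\theta,\eta}$. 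A parallel computation for the $K$-component yields $\rho V^1_1(h_0,K_0)\ge \mathbb{H}_{CV}(h_0,K_0,D_hV^1_1,D_KV^1_1;z)$; taking $\sup_{z\in Z}$ produces the first inequality.

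For the \emph{supersolution} direction, for each $\varepsilon>0$ and small $t>0$ I pick $\hat z^{t,\varepsilon}\in\calz^1_{ad}(h_0,K_0)$ which is $\varepsilon t$-optimal in the DPP. Dividing by $t$, rearranging, and applying the fundamental identity
\[
V^1_1\bigl(\hat h^{\hat z}(t),K^{\hat z}(t)\bigr)-V^1_1(h_0,K_0)=\int_0^t\bigl[\mathbb{H}_{CV}-U_1\bigr]\bigl(\hat h^{\hat z}(s),K^{\hat z}(s),D_hV^1_1,D_KV^1_1;\hat z(s)\bigr)\,ds
\]
along the mild trajectory, one obtains $\rho V^1_1(h_0,K_0)-\varepsilon\le \tfrac{1}{t}\int_0^t \mathbb{H}_{CV}(\cdot;\hat z^{t,\varepsilon}(s))\,ds+o(1)$; bounding $\mathbb{H}_{CV}\le\sup_{z\in Z}\mathbb{H}_{CV}$ pointwise and using continuity of the sup-Hamiltonian at $(h_0,K_0)$, the limit $t\to 0^+$ gives $\rho V^1_1(h_0,K_0)\le \sup_{z\in Z}\mathbb{H}_{CV}(h_0,K_0,D_hV^1_1,D_KV^1_1;z)+\varepsilon$. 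Sending $\varepsilon\to 0$ and combining with the subsolution inequality yields the HJB equation \eqref{eq:HJB} at $(h_0,K_0)$.

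The principal obstacle is the justification of the chain-rule identity for $V^1_1$ along mild solutions: $\hat h^{\hat z}(\cdot)$ does not lie in $D(A)$, so a direct differentiation is impossible. The standard remedy is to consider the Yosida approximations $\widetilde A_\lambda=\lambda\widetilde A(\lambda-\widetilde A)^{-1}$, for which the corresponding trajectories $(\hat h_\lambda,K_\lambda)$ are classical, apply the ordinary chain rule to $s\mapsto V^1_1(\hat h_\lambda(s),K_\lambda(s))$, and pass to the limit $\lambda\to+\infty$. This limit procedure is precisely where the assumption $D_hV^1_1\in D(A^\ast)$ is essential, because it allows to move $\widetilde A_\lambda^\ast$ onto $D_hV^1_1$ and to bound its action uniformly by $\widetilde A^\ast D_hV^1_1$. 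The local boundedness of $D_hV^1_1$, $\widetilde A^\ast D_hV^1_1$, $D_KV^1_1$ near $(h_0,K_0)$ and of $\widetilde B^{\hat z(s)}$ (uniformly in $\hat z$, since $\theta,\eta$ take values in $[0,1]$ and $\hat h^{\hat z}$ stays in a neighbourhood of $h_0$ for small $t$) is what allows the passage both to the $\lambda\to\infty$ limit in the chain rule and to the $t\to 0^+$ limit in the supersolution step.
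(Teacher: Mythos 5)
Your proposal is correct and follows essentially the same route as the paper: the Dynamic Programming Principle from Li and Yong (1995), constant controls and a $t\to 0^+$ limit for the inequality $\rho V^1_1\geq\sup_{z}\mathbb{H}_{CV}$, $\varepsilon t$-optimal controls for the reverse inequality, and the chain rule for mild solutions made legitimate by the assumption $D_hV^1_1\in D(A^\ast)$ (which the paper delegates to Proposition 5.5 of Li and Yong, while you sketch the Yosida-approximation justification explicitly). The only cosmetic difference is that you use the adjoint-semigroup computation in the first half and the chain-rule identity in the second, whereas the paper does the opposite; the mathematical content is the same.
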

\begin{proof}
For simplicity, in this proof, we will write $V$ for $V^1_1$ $\nabla V$ for the vector $(D_hV^1_1, D_KV^1_1)$.
By Theorem 1.1, p. 224 of Li and Yong (1995) $V$ satisfies the dynamic programming principle, that is, for every $(h_0,K_0)\in
Int H^1_+\times (0,+\infty)$ and every $t\geq 0$
  \begin{equation}\label{eq:DPP}
    V(h_0,K_0)=\sup_{\hat z\in\calz^1_{ad}(h_0,K_0)}
    \left\{\int_0^te^{-\rho s}
    U_1\left(\hat h^{\hat z;h_0}(s),\hat z(s)\right)ds
    +e^{-\rho t}
    V\left(\hat h^{\hat z;h_0}(t),K^{\hat z;h_0,K_0}(t)\right)\right\}
  \end{equation}
From now on for simplicity we will write $\underline{h}^{\hat z}(t)$
for $\left(\hat h^{\hat z;h_0}(t),K^{\hat z;h_0,K_0}(t)\right)$.
Using the chain rule for mild solutions (see for example Proposition 5.5 Li and Yong, 1995) we have, for every
$(h_0,K_0)\in Int H^1_+\times (0,+\infty)$,
every $\hat z=(\hat c,\hat \theta,\hat \eta)\in\calz^1_{ad}(h_0,K_0)$
 and every $t\geq 0$,
  \begin{multline*}
    V\left(\underline{h}^{\hat z}(t)\right)
    -V(h_0,K_0)
    =\int_0^t
\langle \underline{h}^{\hat z}(s),
\widetilde{A}^\ast \nabla V(\underline{h}^{\hat z}(s))
\rangle_{H\times\mathbb{R}}ds\\
+\int_0^t\langle
\widetilde{B}^{\hat z} (\underline{h}^{\hat z}(s)),
\nabla V(\underline{h}^{\hat z}(s))
\rangle_{H\times\mathbb{R}}ds.
  \end{multline*}
Therefore, using also \eqref{eq:DPP},
\begin{align*}
  0&\geq \int_0^te^{-\rho s}
  U_1(h^{\hat z;h_0}(s);\hat c(s),\hat \theta(s))ds
  +e^{-\rho t}
  V\left(\underline{h}^{\hat z}(t)\right)-V(h_0,K_0)
  \\
   &
   =\int_0^te^{-\rho s}
   U_1(h^{\hat z;h_0}(s);\hat c(s),\hat \theta(s))ds
   +e^{-\rho t}
   V\left(\underline{h}^{\hat z}(t)\right)
   -e^{-\rho t} V(h_0,K_0)
   +(e^{-\rho t}-1)V(h_0,K_0)
   \\
   &
   =\int_0^t e^{-\rho s}
   U_1(h^{\hat z;h_0}(s);\hat c(s),\hat \theta(s))ds
   +e^{-\rho t}\int_0^t \langle
   \underline{h}^{\hat z}(s),
   \widetilde{A}^\ast\nabla V(\underline{h}^{\hat z}(s))
   \rangle_{H\times\mathbb{R}}ds
   \\
   &
   +\int_0^t\langle
   \widetilde{B}^{\hat z}(\underline{h}^{\hat z}(s)),
    \nabla V(\underline{h}^{\hat z}(s))
    \rangle_{H\times\mathbb{R}}ds
    + (e^{-\rho t}-1)V(h_0,K_0).
\end{align*}

Now, since we are in an open set we know that the control strategies can be taken constant (so $\hat z(t)=\hat z(0)$ for all $t\ge 0$ for a while.
We now divide both sides of the inequality by $t$ and take the limit as $t\to 0$; finding
\begin{multline}
  0\geq U(h_0;\hat c(0),\hat \theta(0))
  +\langle (h_0,K_0),
  \widetilde{A}^\ast \nabla V(h_0,K_0)
  \rangle_{H\times\mathbb{R}}\\
  +\langle
  \widetilde{B}^\pi(h_0,K_0),
    \nabla V(h_0,K_0)
\rangle_{H\times\mathbb{R}}
-\rho V(h_0,K_0).
\end{multline}
Therefore we obtain
\begin{equation*}
  0\geq \sup_{z\in Z}\mathbb{H}_{CV}
  (h_0,K_0,D_hV(h_0,K_0),D_KV(h_0,K_0);z)-\rho V(h_0,K_0).
\end{equation*}
To prove the reverse inequality we fix again
$(h_0,K_0)\in Int H^1_+\times (0,+\infty)$;
by definition of the value function, for any positive $\epsilon$ and any positive $t$ we can find an admissible control
$\hat z^\epsilon(\cdot)\in\calz^1_{ad}(h_0,K_0)$ such that
\begin{multline*}
  -\epsilon t\leq
  \int_0^te^{-\rho s}
  U_1\left(\underline{h}^{\hat z^\epsilon}(s);\hat c^\epsilon(s),\hat \theta^\epsilon(s)\right)ds
  +e^{-\rho t} V(\underline{h}^{\hat z^\epsilon}(t))-V(h_0,K_0)\ .
\end{multline*}
Using the equation satisfied by $\underline{h}^{\\hat z^\epsilon}$ we get
\begin{align*}
  -\epsilon t&\leq e^{-\rho t}
   \left(V(\underline{h}^{\hat z^\epsilon}(t)-V(h_0,K_0)\right)\\
& \qquad   \qquad\qquad\qquad
   +\int_0^te^{-\rho s}
   U_1\left(h^{\hat z^\epsilon;h_0}(s);
   \hat c^\epsilon(s),\hat \theta^\epsilon(s)\right)ds
   +(e^{-\rho t}-1)V(h_0,K_0)
   \\
&
=e^{-\rho t}\langle
\widetilde{T}(t)(h_0,K_0)-(h_0,K_0),
\nabla V(h_0,K_0)
\rangle_{H\times\mathbb{R}}
\\
&
+e^{-\rho t}\langle
\int_0^t\widetilde{T}(t-s)\widetilde{B}^{\pi^\epsilon}
(\underline{h}^{\pi^\epsilon}(s)),
\nabla V(h_0,K_0)
\rangle_{H\times\mathbb{R}}ds
\\
&
+\int_0^te^{-\rho s}
U_1\left(h^{\hat z^\epsilon;h_0}(s); \hat c^\epsilon(s),\hat \theta^\epsilon(s)\right)ds
+o(t)+(e^{-\rho t}-1)V(h_0,K_0)
\end{align*}
We can then find a continuous function $\sigma:[0,+\infty]\to[0,+\infty]$ such that $\sigma(0)=0$ and
\begin{align*}
\sigma(\epsilon)&\leq
\frac1t e^{-\rho t}\langle
(\widetilde{T}(t)-\mathrm{Id})(h_0,K_0),
\nabla V(h_0,K_0),
\rangle_{H\times\mathbb{R}}
+\frac1t \int_0^t\langle
\widetilde{B}^{\pi^\epsilon}(h_0,K_0),
\nabla V(h_0,K_0)
\rangle_{H\times\mathbb{R}}ds
\\
&
+\frac1t \int_0^te^{-\rho a}
U_1\left(h_0;\hat c^\epsilon(s),\hat \theta^\epsilon(s)\right)ds
+o(1)
+\frac{e^{-\rho t}-1}{t}V(h_0,K_0)
\\
&
\leq
\sup_{z\in Z}\mathbb{H}_{CV}
(h_0,K_0,D_h V(h_0,K_0),D_K V(h_0,K_0);z)
+\frac{e^{-\rho t}-1}{t}V(h_0,K_0)+o(1),
\end{align*}
which implies, taking the limit as $t\to 0$,
\begin{equation*}
  \sigma(\epsilon)\leq \sup_{z\in Z}
  \mathbb{H}_{CV}(h_0,K_0,D_h V(h_0,K_0),D_K V(h_0,K_0);z)
  -\rho V(h_0,K_0).
\end{equation*}
Letting now $\epsilon$ go to $0$ we get the result.
\end{proof}

\begin{Remark}
The above Theorem \eqref{th:V} holds in a completely similar way for the other problems where the target is changed.
Of course, in case of finite horizon problems the HJB equation is different and, for example, in the case of target \eqref{eq:J4def}, is
$$
-\frac{\partial v(t,h,K)}{\partial t}=
\sup_{z\in Z}\mathbb{H}_{CV}
\left(h,K,D_h v(h,K),D_K v(h,K);z\right)
$$
for $t\in[0,T]$, $(h,K)\in Int H^1_+ \times (0,+\infty)$
and with the final condition $v(T,h,K)=K$.

Note finally that in all such cases we would take the enlarged constraint
$h\in H^1_+$ instead of the one $h \in H_+$.
\end{Remark}

\section{Verification theorems}
\label{sec:verification}

We first recall the definition of optimal strategy for our starting problem and for the ``enlarged'' one.

\begin{Definition}
  For $(h_0,K_0)\in H_+\times\mathbb{R}_+$
  (respectively $(h_0,K_0)\in Int H^1_+\times (0,+\infty)$), an admissible control strategy $\hat z^\ast \in\calz_{ad}(h_0,K_0)$   (respectively $\hat z^\ast \in\calz^1_{ad}(h_0,K_0)$) is called \emph{optimal} at $(h_0,K_0)$ if
  \begin{equation*}
    V_1(h_0,K_0)=J_1(h_0,K_0;\hat z^\ast(\cdot)),
    \hbox{ respectively }
V_1^1(h_0,K_0)=J_1(h_0,K_0;\hat z^\ast(\cdot)),
  \end{equation*}
that is, if it is a maximizer for $J$. The corresponding solution $(h^{\pi^\ast;h_0},K^{\pi^\ast;K_0})$ of (\ref{eq:hK2}) is called an \emph{optimal state trajectory}.
\end{Definition}
The following result is the so-called Verification Theorem
which provides sufficient optimality conditions.
\begin{Theorem}
\label{thm:comparison}
  Let $v$ be a classical solution of the HJB equation \eqref{eq:HJB}, with the additional property that for every $\hat z\in\calz^1_{ad}(h_0,K_0)$
  \begin{equation}
    \label{eq:lim_0}
    \lim_{T\to+\infty}e^{-\rho T}
    v\left(\hat h^{\hat z;h_0}(T),K^{\hat z;h_0,K_0}(T)\right)=0;
  \end{equation}
then $V_1^1(h_0,K_0)\leq v(h_0,K_0)$ for every $(h_0,K_0)\in Int H^1_+\times (0,+\infty)$.
Moreover, if an admissible control $\hat z^*\in \calz^1_{ad}(h_0,K_0)$ is such that
\begin{align}\label{eq:hamsup}
\sup_{z\in Z}
\mathbb{H}_{CV}
(\underline{h}^{\hat z^*}(t),\nabla v(\underline{h}^{\hat z^*}(t)),z)
=
\mathbb{H}_{CV}
(\underline{h}^{\hat z^*}(t),\nabla v(\underline{h}^{\hat z^*}(t)),
\hat z^*(t))
\end{align}
then $\hat z^*$ is optimal at $(h_0,K_0)$ and
$V_1^1(h_0,K_0)=v(h_0,K_0)$.
\end{Theorem}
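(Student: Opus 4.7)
The plan is to bound the payoff along an arbitrary admissible strategy $\hat z \in \calz^1_{ad}(h_0, K_0)$ by $v(h_0,K_0)$, via the fundamental theorem of calculus applied to the map $t \mapsto e^{-\rho t} v\bigl(\hat h^{\hat z;h_0}(t), K^{\hat z;h_0,K_0}(t)\bigr)$. The regularity hypotheses built into the notion of classical solution, namely $v \in C^1$ with $D_h v(h,K) \in D(A^*)$, combined with the chain rule for mild solutions (Proposition 5.5 of Li and Yong, 1995) already used in the proof of Theorem \ref{th:V}, make this differentiation legitimate despite the unboundedness of $A$. Throughout I write $\underline{h}^{\hat z}(t) = \bigl(\hat h^{\hat z;h_0}(t), K^{\hat z;h_0,K_0}(t)\bigr)$ and $\nabla v = (D_h v, D_K v)$.

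The key computation, run exactly as in Theorem \ref{th:V}, yields
\[
\frac{d}{dt}\Bigl[e^{-\rho t} v\bigl(\underline{h}^{\hat z}(t)\bigr)\Bigr] = e^{-\rho t}\Bigl[-\rho v + \langle \underline{h}^{\hat z}(t), \widetilde{A}^* \nabla v \rangle_{H\times\R} + \langle \widetilde{B}^{\hat z(t)}(\underline{h}^{\hat z}(t)), \nabla v \rangle_{H\times\R}\Bigr],
\]
with $v$ and $\nabla v$ evaluated at $\underline{h}^{\hat z}(t)$. Because $v$ solves \eqref{eq:HJBnew}, $\rho v(\underline{h}^{\hat z}(t)) \geq \mathbb{H}_{CV}(\underline{h}^{\hat z}(t), \nabla v; \hat z(t))$ pointwise in $t$, with equality exactly when $\hat z(t)$ realizes the supremum in the current-value Hamiltonian. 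Unpacking the definition of $\mathbb{H}_{CV}$, this rewrites as
\[
\frac{d}{dt}\Bigl[e^{-\rho t} v\bigl(\underline{h}^{\hat z}(t)\bigr)\Bigr] \leq -e^{-\rho t} U_1\bigl(\hat h^{\hat z;h_0}(t); \hat c(t), \hat\theta(t)\bigr).
\]
Integrating from $0$ to $T$, using the transversality condition \eqref{eq:lim_0} to kill the boundary term as $T\to\infty$, and then taking the supremum over admissible $\hat z$, we obtain $V_1^1(h_0,K_0) \leq v(h_0,K_0)$. If, in addition, $\hat z^*$ satisfies the pointwise maximum condition \eqref{eq:hamsup} along its own trajectory, each of these inequalities is an equality, yielding $v(h_0,K_0) = J_1(h_0,K_0; \hat z^*)$, which combined with the reverse bound proves simultaneously that $V_1^1(h_0,K_0) = v(h_0,K_0)$ and that $\hat z^*$ is optimal.

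The principal technical obstacle is a clean justification of the chain-rule step: the mild solution $\hat h^{\hat z;h_0}$ generally does not take values in $D(A)$, which is precisely why the requirement $D_h v \in D(A^*)$ is built into Definition \ref{sef:solHJB}, so that the pairing $\langle \hat h, A^* D_h v \rangle$ makes sense in place of the ill-defined $\langle A\hat h, D_h v \rangle$. The standard route is to approximate $h_0$ by elements of $D(A)$, apply the chain rule to the corresponding strict solutions, and then pass to the limit using continuity of $A^* D_h v$ together with the mild formulation from Proposition \ref{pr:SE}. A secondary subtlety is that $v$ is only defined on $Int H^1_+\times (0,\infty)$ while an admissible trajectory is only guaranteed to lie in $H^1_+\times \R_+$; one must argue that strict positivity of $(h_0,K_0)$ propagates along $\widetilde B^{\hat z}$-driven dynamics or provide a boundary-value extension. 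Finally, integrability of $e^{-\rho t}U_1$ along the trajectory, needed for the limit in $T$, follows from the growth assumptions on $u$ and a priori bounds on $\hat h$, but should be verified explicitly.
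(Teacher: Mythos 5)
Your proposal is correct and follows essentially the same route as the paper: differentiate $t\mapsto e^{-\rho t}v(\underline{h}^{\hat z}(t))$ via the chain rule for mild solutions, invoke the HJB equation, integrate, use the transversality condition \eqref{eq:lim_0}, and observe that the maximum condition \eqref{eq:hamsup} turns the resulting inequality into an equality. The only presentational difference is that the paper records the exact ``fundamental identity'' \eqref{eq:fundamental} (keeping the nonnegative Hamiltonian-deficit integral explicit) rather than passing directly to the inequality, but this is the same computation.
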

\begin{proof}
We write $\nabla v$ for $(D_h v,D_K v)$.
Moreover, as in the proof of Theorem 4.2 we write for simplicity $\underline{h}^{\hat z}(t)$ in place of
$(h^{\hat z;h_0}(t),K^{\hat z;h_0,K_0}(t))$.
We first prove that, for every $\hat z\in \calz^1_{ad}(h_0,K_0)$ we have the fundamental identity
  \begin{multline}
    \label{eq:fundamental}
    v(h_0,K_0)
    =
    J_1(h_0,K_0;\hat z(\cdot))
    \\
+\int_0^\infty e^{-\rho t}
\left[\sup_{z\in Z}
\mathbb{H}_{CV}
(\underline{h}^{\hat z}(t),\nabla v(\underline{h}^{\hat z}(t)),z) \right.
\\
\left.
-\mathbb{H}_{CV}
(\underline{h}^{\hat z}(t),\nabla v(\underline{h}^{\hat z}(t)),
\hat z(t))\right]dt.
  \end{multline}
Indeed, differentiating the function
$t\mapsto e^{-\rho t}v(\underline{h}^{\hat z}(t))$ and integrating on $[0,T]$ we find
\begin{align*}
v(h_0,K_0)
&
=e^{-\rho T} v(\underline{h}^{\hat z}(T))
+\int_0^Te^{-\rho t}\rho v(\underline{h}^{\hat z}(t)) dt
\\
&
-\int_0^T e^{-\rho t}\langle
(\underline{h}^{\hat z}(t)),
\widetilde{A}^\ast
\nabla v(\underline{h}^{\hat z}(t))
\rangle_{H\times\mathbb{R}}dt
\\
&
-\int_0^T e^{-\rho t}\langle
\widetilde{B}^{{\hat z}(t)}(\underline{h}^{\hat z}(t)),
\nabla v(\underline{h}^{\hat z}(t))
\rangle_{H\times\mathbb{R}}dt.
\end{align*}
We can then add and subtract the term
$\int_0^T e^{-\rho t} U_2(h^{{\hat z},h_0}(t),\hat z(t))dt$ on the right hand side and use the fact that $v$ solves the HJB equation \eqref{eq:HJB} to obtain
\begin{align*}
v(h_0,K_0)
&
=e^{\rho T}v(\underline{h}^{\hat z}(T))
+\int_0^T U_2(h^{{\hat z};h_0}(t), \hat z(t))dt
\\
&
+\int_0^T e^{-\rho t}\left[\sup_{z\in Z}
\mathbb{H}_{CV}
(\underline{h}^{\hat z}(t),\nabla v(\underline{h}^{\hat z}(t));
\hat z(t)) \right.
\\
&
\left.
-\mathbb{H}_{CV}(\underline{h}^{\hat z}(t),
\nabla v(\underline{h}^{\hat z}(t)); \hat z(t))\right]dt.
\end{align*}
The fundamental identity then follows taking the limit as $T\to\infty$ and using (\ref{eq:lim_0}). Since the last integral is always non-negative, and eventually taking the supremum over all admissible controls on the right hand side, we get the first claim.
The second claim follows observing that, for such $\hat z^*$ we have, from \eqref{eq:fundamental} and the first claim,
$$
V_1^1(h_0,K_0)\le v(h_0,K_0)=J_1(h_0,K_0;\hat z^*)
$$
which implies that $V_1^1(h_0,K_0)=v(h_0,K_0)=J_1(h_0,K_0;\hat z^*)$
and so the claim.
 \end{proof}
\begin{Corollary}
\label{cor:verification1}
Let $v$ be a classical solution of the HJB equation \eqref{eq:HJB}
and assume that the set valued map
\begin{equation*}
    (h,K) \to \argmax_{z\in Z}
    \mathbb{H}_{CV}
    \left(h,K,\nabla v(h,K);z\right)
  \end{equation*}
admits a measurable selection $G: Int H^1_+\times (0,+\infty)\to Z$.
Let $(h_0,K_0)\in Int H^1_+\times (0,+\infty)$
and assume that the closed loop equation
\begin{equation}
  \label{eq:hK2closedloop}
  \begin{cases}
    \frac{d}{dt}(\hat h,K)(t)&=
    \widetilde{A}(\hat h(t),K(t))+
    \widetilde{B}^{G(\hat h(t),K(t))}\left(\hat h(t),K(t)\right), \quad t \ge 0\\
    (\hat h,K)(0)&=(h_0,K_0),
  \end{cases}
\end{equation}
admits a solution $(\hat h^{G;h_0},K^{G;h_0,K_0})$ such that
the control strategy
$$
\hat z^*(t)=G\left(\hat h^{G;h_0}(t),K^{G;h_0,K_0}(t)\right)
$$
belongs to $\calz_{ad}^1(h_0,K_0)$.
then $\hat z^*$ is optimal.
\end{Corollary}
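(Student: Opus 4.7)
The plan is to apply the second half of Theorem \ref{thm:comparison}: one exhibits an admissible strategy $\hat z^*$ that pointwise maximizes the current-value Hamiltonian along its own state trajectory, and optimality together with $V_1^1=v$ follows immediately. All the hypotheses of that theorem are available here: $v$ is a classical solution by assumption, $\hat z^*\in\calz_{ad}^1(h_0,K_0)$ is part of the data, and the transversality condition \eqref{eq:lim_0} is implicitly inherited from the verification setup.

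The first step is to identify the open-loop trajectory driven by $\hat z^*$ with the closed-loop trajectory of \eqref{eq:hK2closedloop}. Substituting $\hat z^*(t)=G\bigl(\hat h^{G;h_0}(t),K^{G;h_0,K_0}(t)\bigr)$ into the state equation \eqref{eq:hK2} yields exactly the closed-loop system, so mild-solution uniqueness from Proposition \ref{pr:SE} gives
\[
\underline{h}^{\hat z^*}(t)=\bigl(\hat h^{G;h_0}(t),K^{G;h_0,K_0}(t)\bigr)\qquad\text{for every }t\ge 0.
\]
The second step is the Hamiltonian-maximization identity \eqref{eq:hamsup}: by the defining property of the measurable selection,
\[
G(h,K)\in\argmax_{z\in Z}\mathbb{H}_{CV}\bigl(h,K,\nabla v(h,K);z\bigr)\qquad\text{for every }(h,K)\in\operatorname{Int}H^1_+\times(0,+\infty),
\]
and evaluating at $(h,K)=\underline{h}^{\hat z^*}(t)$ produces \eqref{eq:hamsup} pointwise in $t$. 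Theorem \ref{thm:comparison} then delivers $V_1^1(h_0,K_0)=v(h_0,K_0)=J_1(h_0,K_0;\hat z^*)$, i.e.\ $\hat z^*$ is optimal at $(h_0,K_0)$.

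The main obstacle I anticipate is the measurability and $L^2$-integrability of $t\mapsto\hat z^*(t)=G\bigl(\underline{h}^{\hat z^*}(t)\bigr)$ needed to ensure $\hat z^*\in\calz_0=L^2(\R_+;Z)$. Measurability follows from composing the Borel-measurable selection $G$ with the continuous mild-solution trajectory $t\mapsto\underline{h}^{\hat z^*}(t)$, and integrability from the uniform boundedness of the $\theta,\eta$-slots in $Z$ together with whatever bound on the consumption component is encoded in $\calz_{ad}^1$. In concrete applications this must be paired with a Kuratowski--Ryll-Nardzewski type result guaranteeing the existence of $G$ in the first place, which is where upper/lower semicontinuity of $z\mapsto\mathbb{H}_{CV}^1(h,K,p,Q;z)$ and a Polish structure on $Z$ become essential; that selection-theoretic step is the delicate infinite-dimensional point that is easy to overlook but is genuinely needed to close the argument.
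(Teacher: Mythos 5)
Your proposal is correct and follows essentially the same route as the paper, whose own proof is just the one-line remark that the claim follows from Theorem \ref{thm:comparison} and the fundamental identity \eqref{eq:fundamental}; you have simply made explicit the two steps the authors leave implicit (identifying the closed-loop trajectory with the open-loop one driven by $\hat z^*$, and reading off \eqref{eq:hamsup} from the $\argmax$ property of $G$). Your added remarks on measurability, integrability, and the need for \eqref{eq:lim_0} are sensible technical caveats but do not change the argument.
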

\begin{proof}
It immediately follows from the previous Theorem  \ref{thm:comparison} and from the fundamental identity (\ref{eq:fundamental}).
\end{proof}

\begin{Corollary}
\label{cor:verification2}
Suppose that the value function $V_1^1$ is a classical solution of the Hamilton-Jacobi-Bellman equation \eqref{eq:HJB}, that $\hat z^\ast \in\calz^1_{ad}(h_0,K_0)$ is optimal at $(h_0,K_0)$ and that
  \begin{equation*}
    \lim_{T\to +\infty}e^{-\rho T}V_1^1\left(h^{\pi^\ast;h_0}(t),K^{\pi^\ast;K_0}(t)\right)=0\ .
  \end{equation*}
Then $\hat z^*$ satisifes \eqref{eq:hamsup}.
\end{Corollary}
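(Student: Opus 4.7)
The plan is to apply the fundamental identity \eqref{eq:fundamental} established in the proof of Theorem \ref{thm:comparison} with the specific choice $v = V_1^1$, which is allowed because we are assuming $V_1^1$ is a classical solution of \eqref{eq:HJB}. The identity is valid for every admissible $\hat z\in \calz^1_{ad}(h_0,K_0)$, so in particular for the given optimal $\hat z^*$.

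First I would write
\[
V_1^1(h_0,K_0)=J_1(h_0,K_0;\hat z^*)+\int_0^\infty e^{-\rho t}\Delta(t)\,dt,
\]
where
\[
\Delta(t):=\sup_{z\in Z}\mathbb{H}_{CV}\bigl(\underline{h}^{\hat z^*}(t),\nabla V_1^1(\underline{h}^{\hat z^*}(t));z\bigr)-\mathbb{H}_{CV}\bigl(\underline{h}^{\hat z^*}(t),\nabla V_1^1(\underline{h}^{\hat z^*}(t));\hat z^*(t)\bigr).
\]
The derivation of this identity uses exactly the chain rule along the mild solution, the fact that $V_1^1$ solves HJB, and the transversality condition, all of which are in place by assumption. (The transversality condition replaces the abstract limit \eqref{eq:lim_0} precisely along the trajectory driven by $\hat z^*$, which is all that is needed in the final passage to the limit $T\to+\infty$.)

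Next, by the very definition of the supremum we have $\Delta(t)\ge 0$ for a.e.\ $t\ge 0$, so the integral on the right-hand side is nonnegative. On the other hand, the optimality of $\hat z^*$ at $(h_0,K_0)$ gives
\[
V_1^1(h_0,K_0)=J_1(h_0,K_0;\hat z^*).
\]
Substituting this into the fundamental identity forces
\[
\int_0^\infty e^{-\rho t}\Delta(t)\,dt=0.
\]
Since the integrand is nonnegative and $e^{-\rho t}>0$, we conclude $\Delta(t)=0$ for a.e.\ $t\ge 0$, which is exactly \eqref{eq:hamsup} along the optimal trajectory.

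The main obstacle, really the only delicate point, is verifying that the fundamental identity is indeed applicable to $\hat z^*$: one must check measurability and integrability of $t\mapsto \mathbb{H}_{CV}(\underline{h}^{\hat z^*}(t),\nabla V_1^1(\underline{h}^{\hat z^*}(t));\hat z^*(t))$ and of its pointwise supremum over $z\in Z$, so that the integrals make sense and the passage to the limit $T\to\infty$ is legitimate. These are standard measurable selection/continuity arguments, relying on the continuity of $\nabla V_1^1$ assumed in Definition \ref{sef:solHJB}, the continuity of $\widetilde{A}^*\nabla V_1^1$, the Lipschitz dependence of $\widetilde{B}^z$ on $(h,K)$ and the joint measurability of the map $(t,z)\mapsto \mathbb{H}_{CV}^1(\underline{h}^{\hat z^*}(t),\nabla V_1^1(\underline{h}^{\hat z^*}(t));z)$. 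Conclusion \eqref{eq:hamsup} follows holding a.e.\ in $t$, which is the natural formulation in this measurable setting.
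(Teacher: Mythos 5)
Your proposal is correct and follows essentially the same route as the paper: apply the fundamental identity \eqref{eq:fundamental} with $v=V_1^1$ (legitimate since $V_1^1$ is assumed to be a classical solution and the transversality condition holds along the optimal trajectory), then use optimality to force the nonnegative integral term to vanish, hence the integrand vanishes a.e., which is \eqref{eq:hamsup}. Your additional remarks on measurability and integrability only make explicit points the paper leaves implicit.
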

\begin{proof}
  Proceeding as in the proof of Theorem \ref{thm:comparison} we can show that the value function $V$ satisfies the fundamental identity (\ref{eq:fundamental}). Since $\hat z^\ast(\cdot)$ is optimal the integral term on the right hand side of (\ref{eq:fundamental}) must be $0$, and the claim follows.
\end{proof}

\begin{Remark}
The above results allow, if we can find $V_1^1$, at least numerically, to solve the problems with the enlarged constraints.
To pass to our initial control problem we have to show that, for some $(h_0,K_0) \in H_+\times \R_+$, the optimal control of the enlarged problem is als admissible for the initial problem.
This has been done e.g. in Boucekkine et al (2019) and the same idea may work in some special cases of our set-up.
\end{Remark}

\section{Conclusion}
\label{sec:conclusion}

Given the strong differences in the effects of some epidemics (and particularly that of COVID-19) as individuals vary in age, it is important, in trying to understand the economic impact of the contagion and in evaluating the policies to combat it, to model it as precisely as possible.

In the previous contributions which integrate the epidemiological dynamics in macro-dynamic models, the stratification by age of population is often absent and, when introduced, it is modeled using a finite number of groups with no possibility to move from one group to another.

In this paper we propose a general fully age-structured time continuous set-up for macro analysis of epidemics and economic dynamic.

After rewriting the problem using a suitable Hilbert space reformulation of the associated infinite dimensional optimal control problem, we provide verification type results which, given our general infinite dimensional setting cannot be derived from previous results in the literature.



\bigskip




\begin{thebibliography}{9}

\bibitem[]{Acemoglu} \hskip -0.3cm Acemoglu, D., Chernozhukov, V., Werning, I. and Whinston, M.D., 2020. Optimal targeted lockdowns in a multi-group SIR model. NBER Working Paper w27102.

\bibitem[]{Alvarez} \hskip -0.3cm  Alvarez, F.E., Argente, D. and Lippi, F., 2020. A simple planning problem for covid-19 lockdown. NBER Working Paper w26981.

\bibitem[]{Anderson} \hskip -0.3cm  Anderson, R. M. and May, R. M., 1985. Age-related changes in the rate of disease transmission: implications for the design of vaccination programmes. Epidemiology and Infection, 94(3):365-436.

\bibitem[]{BDDM07} \hskip -0.3cm  Bensoussan, Da Prato, Delfour, Mitter 2007. \emph{Representation and control of infinite dimensional systems}. Birkhauser, Berlin.

\bibitem[]{BFFGJOEG} \hskip -0.3cm  
Boucekkine R., Fabbri G., Federico S., Gozzi F. (2019). Growth and agglomeration in the heterogeneous
space: a generalized AK approach. Journal of Economic Geography, 19 (6), 1287-1318.

\bibitem[]{Clemens} \hskip -0.3cm  Clemens, V., Deschamps, P., Fegert, J.M., Anagnostopoulos, D., Bailey, S., Doyle, M., Eliez, S., Hansen, A.S., Hebebrand, J., Hillegers, M. and Jacobs, B., 2020. Potential effects of ``social'' distancing measures and school lockdown on child and adolescent mental health. European Child and Adolescent Psychiatry.

\bibitem[]{Dietz} \hskip -0.3cm  Dietz, K. and Schenzle, D., 1985. Proportionate mixing models for age-dependent infection transmission. Journal of Mathematical Biology, 22(1):117-120.

\bibitem[]{Rebelo} \hskip -0.3cm  Eichenbaum, M.S., Rebelo, S. and Trabandt, M., 2020. The macroeconomics of epidemics. NBER Working Paper w26882.

\bibitem[]{FabbriGozzi08} \hskip -0.3cm  
Fabbri G., Gozzi F. (2008). Solving optimal growth models with vintage capital: The dynamic programming
approach. Journal of Economic Theory, 143(1), 331-373.

\bibitem[]{FabbriGozziSwiech10} \hskip -0.3cm  Fabbri, G., Gozzi, F. and Swiech, A., 2010. Verification theorem and construction of $\varepsilon$-optimal controls for control of abstract evolution equations. Journal of Convex Analysis, 17(2):611-642.



\bibitem[]{FaggianGozziJME} \hskip -0.3cm  Silvia Faggian, Fausto Gozzi, 2010.
Optimal investment models with vintage capital: Dynamic programming approach. Journal of Mathematical Economics, 46(2010):416-437.

\bibitem[]{Favero} \hskip -0.3cm  Favero, C. A., Ichino, A. and Rustichini, A., 2020. Restarting the economy while saving lives under Covid-19. CEPR Discussion Papers 14664.

\bibitem[]{Gollier} \hskip -0.3cm  Gollier, C., 2020. Cost-benefit analysis of age-specific deconfinement strategies. Covid Economics, 24:1-31.
	
\bibitem[]{Iannelli95} \hskip -0.3cm  Iannelli, M., 1995. \emph{Mathematical theory of age-structured population dynamics}. Giardini editori e stampatori, Pisa.

\bibitem[]{Iannelli-Martcheva} \hskip -0.3cm  Iannelli, M., Marthceva M., 2003. \emph{Homogeneous dynamical systems and the age-structured SIR model with proportionate mixing incidence}, in ``Evolution Equations: Applications to Physics, Industry, Life Sciences and Economics'', Iannelli, M. and Lumer, G., eds.; pp. 227-251. Birkh{\"a}user, Basel.

\bibitem[]{IanelliMilner} \hskip -0.3cm  Iannelli, M. and Milner, F., 2017. \emph{The basic approach to age-structured population dynamics}. Models Methods and Numerics, 10, Springer, Berlin.

\bibitem[]{Jones} \hskip -0.3cm  Jones, C.J., Philippon, T. and Venkateswaran, V., 2020. Optimal mitigation policies in a pandemic: Social distancing and working from home. NBER Working Paper w26984.

\bibitem[]{Krueger} \hskip -0.3cm  Krueger, D., Uhlig, H. and Xie, T., 2020. Macroeconomic dynamics and reallocation in an epidemic. NBER Working Paper w27047.

\bibitem[]{Li-Yong} \hskip -0.3cm  Li, X., Yong, J., 1995. \emph{Optimal Control Theory for Infinite Dimensional Systems}, Systems and Control: Foundations and Applications. Birkh{\"a}user, Boston.


\bibitem[]{Martcheva} \hskip -0.3cm  Martcheva, M., 2015. \emph{An introduction to mathematical epidemiology}, Vol. 61 of ``Texts in Applied Mathematics''. New York: Springer.

\bibitem[]{McKendrick} \hskip -0.3cm  McKendrick, A., 1925. Applications of Mathematics to Medical Problems. Proceedings of the Edinburgh Mathematical Society, 44, 98-130.

\bibitem[]{Moghadas} \hskip -0.3cm  Moghadas, S. M., Shoukat, A., Fitzpatrick, M. C., Wells, C. R., Sah, P., Pandey, A., Sachs, J. D., Wang, Z., Meyers, L. A., Singer, B. H. and Galvani, A. P., 2020. Projecting hospital utilization during the COVID-19 outbreaks in the United States. Proceedings of the National Academy of Sciences, 117(16):9122-9126.


\bibitem[]{oecd} \hskip -0.3cm  OECD, 2020. Economic Outlook, June 2020.

\bibitem[]{Palivos-Yip} \hskip -0.3cm  Palivos, T. and Yip, C.K., 1993. \emph{Optimal population size and endogenous growth}. Economics letters, 41(1), pp.107-110.

\bibitem[]{Piguillem20} \hskip -0.3cm  Piguillem, F. and Shi, L., The optimal COVID-19 quarantine and testing policies. 2020. EIEF Working Paper 20/04.

\bibitem[]{Salje} \hskip -0.3cm  Salje, H., Kiem, C.T., Lefrancq, N., Courtejoie, N., Bosetti, P., Paireau, J., Andronico, A., Hozé, N., Richet, J., Dubost, C.L. and Le Strat, Y., 2020. Estimating the burden of SARS-CoV-2 in France. Science, 369(6500):208-211.

\bibitem[]{YongZhou99} \hskip -0.3cm  Yong, J. and Zhou, X.Y., 1999. \emph{Stochastic controls: Hamiltonian systems and HJB equations}. Springer, Berlin.

\end{thebibliography}

\bigskip

\bigskip

\section*{Acknowledgements}
The work of Giorgio Fabbri is supported by the French National Research Agency in the framework of the ``Investissements d'avenir'' program (ANR-15-IDEX-02) and in that of the center of excellence LABEX MME-DII (ANR-11-LABX-0023-01).

The work of Fausto Gozzi and Giovanni Zanco is supported by the Italian Ministry of University and Research (MIUR),  in the framework of PRIN projects 2015233N54\_006 (“Deterministic and stochastic evolution equations”) and 2017FKHBA8\_001 (“The Time-Space Evolution of Economic Activities: Mathematical Models and Empirical Applications”)

\bigskip

\bigskip

\bigskip

\bigskip

\end{document}